\documentclass{article}
\pagestyle{plain}

\usepackage{times}

\usepackage[utf8]{inputenc}
\usepackage{amsmath}
\usepackage{amsfonts}
\usepackage{amssymb}
\usepackage{tikz,xspace}
\usepackage{amsthm}
\usepackage[round]{natbib} 
\title{Structure in Dichotomous Preferences\thanks{A preliminary version appeared in the proceedings of IJCAI 2015, the International Joint Conference on Artificial Intelligence \citep{aaai/ElkindL15-dichpref}.}}
\pdfinfo{
/Title (Structure in Dichotomous Preferences)
/Author (Edith Elkind, Martin Lackner)
}
\author{Edith Elkind\\University of Oxford, UK \\ elkind@cs.ox.ac.uk 
\and Martin Lackner\\University of Oxford, UK \\martin.lackner@cs.ox.ac.uk}
\date{}

\newcommand{\calP}{{\cal P}}

\newcommand{\calO}{{\cal O}}
\newcommand{\calN}{{\cal N}}
\newcommand{\calW}{{\cal W}}

\newcommand{\poly}{{\mathrm{poly}}}
\newcommand{\vecw}{{\mathbf{w}}}
\newcommand{\vecr}{{\mathbf{r}}}
\newcommand{\pav}{\ensuremath{\mathit{PAV}}\xspace}
\newcommand{\mav}{\ensuremath{\mathit{MAV}}\xspace}

\theoremstyle{plain}
\newtheorem{theorem}{Theorem}
\newtheorem{proposition}[theorem]{Proposition}
\newtheorem{lemma}[theorem]{Lemma}

\newtheorem{conjecture}[theorem]{Conjecture}
\newtheorem*{theorem*}{Theorem}
\newtheorem*{fact*}{Fact}
\newtheorem*{proposition*}{Proposition}
\theoremstyle{definition}
\newtheorem{definition}{Definition}
\newtheorem*{definition*}{Definition}
\newtheorem{example}{Example}
\theoremstyle{remark}

\newtheorem*{remark*}{Remark}

\begin{document}

\maketitle

\begin{abstract}\noindent
Many hard computational social choice problems are known to become tractable
when voters' preferences belong to a restricted domain, such
as those of single-peaked or single-crossing preferences. 
However, to date, all algorithmic results of this type
have been obtained for the setting where each voter's preference list
is a total order of candidates. The goal of this paper
is to extend this line of research to the setting where voters'
preferences are {\em dichotomous}, i.e., each voter approves
a subset of candidates and disapproves the remaining candidates.
We propose several analogues of the notions of single-peaked and single-crossing
preferences for dichotomous profiles and investigate the relationships among them.
We then demonstrate that for some of these notions the respective
restricted domains admit efficient algorithms for computationally hard 
approval-based multi-winner rules. 
\end{abstract}

\section{Introduction}
Preference aggregation is a fundamental problem in social choice, which
has recently received a considerable amount of attention from the AI community.
In particular, an important research question in {\em computational social choice}
\citep{comsoc-book} 
is the complexity of computing the output of various 
preference aggregation procedures. While for most common single-winner 
rules winner determination is easy,
many attractive rules that output a committee (a fixed-size set of winners)
or a ranking of the candidates are known to be computationally hard.

There are several ways to circumvent these hardness results, such as using approximate
and parameterized algorithms. These standard algorithmic approaches are complemented  
by an active stream of research that analyzes the computational complexity of voting rules
on {\em restricted preference domains}, such as the classic domains of 
single-peaked \citep{bla:b:sp}
or single-crossing \citep{mir:j:single-crossing} preferences. This research direction was 
popularized by \citet{wal:c:uncertainty}
and \citet{fal-hem-hem-rot:j:sp}, 
and has lead to a number of efficient algorithms
for winner determination under prominent voting rules as well as for manipulation and control,
which can be used when voters' preferences belong to one of these restricted domains
\citep{wal:c:uncertainty,fal-hem-hem-rot:j:sp,BBHH10,FHH14,BSU13a,sko-yu-fal:j:mwsc,MF14}.

To the best of our knowledge, this line of work only considers settings
where voters' preferences are given by total orders over the set of candidates;
indeed, this is perhaps the most widely studied setting in the area
of computational social choice. However, computationally complex
preference aggregation problems may also arise when voters' preferences are
{\em dichotomous}, i.e., each voter approves a subset of the candidates 
and disapproves the remaining candidates. Committee selection rules
for voters with dichotomous preferences, or {\em approval-based}
rules, have recently attracted some attention
from the computational social choice community, and for two prominent such rules
(specifically, Proportional Approval Voting (PAV) \citep{MaKi12a}
and Maximin Approval Voting (MAV) \citep{BKS07a})
computing the winning committee is known to be NP-hard \citep{AGG+14a,LMM07a}. It is therefore
natural to ask if one could identify a suitable analogue of 
single-peaked/single-crossing preferences for the the dichotomous setting,
and design efficient algorithms for approval-based rules 
over such restricted dichotomous preference domains.

To address this challenge, in this paper we propose and explore a number
of domain restrictions for dichotomous preferences that build
on the same intuition as the concepts of single-peakedness and 
single-crossingness. Some of our restricted domains  
are defined by embedding voters or candidates into the real line,
and requiring that the voters' preferences over the candidates 
``respect'' this embedding; others are obtained by viewing 
dichotomous preferences as weak orders and requiring them to admit 
a refinement that has a desirable structural property.
Surprisingly, these approaches lead
to a large number of concepts that are pairwise non-equivalent
and capture different aspects of our intuition about what it means
for preferences to be ``one-dimensional''.
We analyze the relationships among these restricted preference domains, 
(see Figure~\ref{fig:relations} for a summary), and discuss the complexity
of detecting whether a given dichotomous profile belongs
to one of these domains. We then demonstrate that considering
these domains is useful from the perspective of algorithm design, 
by providing polynomial-time and FPT algorithms for PAV and MAV 
under some of these domain restrictions. 

\section{Basic Definitions}
Let $C=\{c_1,\dots,c_m\}$ be a finite set of candidates.
A {\em partial order} $\succ$ over $C$ is a reflexive, antisymmetric
and transitive binary relation on $C$; a partial order $\succ$
is said to be {\em total} if for each $c,d\in C$ we have $c\succ d$ or $d\succ c$. 
We say that a partial order $\succ$ over $C$ is a {\em dichotomous weak order}
if $C$ can be partitioned into two disjoint sets $C^+$ and $C^-$
(one of which may be empty) so that $c\succ d$ for each $c\in C^+, d\in C^-$
and the candidates within $C^+$ and $C^-$ are incomparable under $\succ$.

An {\em approval vote} on $C$ is an arbitrary subset of $C$.
We say that an approval vote $v$ is {\em trivial} if $v=\emptyset$
or $v=C$. A {\em dichotomous profile} $\calP=(v_1,\dots,v_n)$ is a list of $n$
approval votes; we will refer to $v_i$ as the vote of voter $i$.
We write $\overline{v_i}=C\setminus v_i$.
We associate an approval vote $v_i$ with 
the dichotomous weak order $\succ_{v_i}$ that satisfies $c\succ_{v_i} d$ 
if and only if $c\in v_i$, $d\in \overline{v_i}$. Note that $v_i=\emptyset$
and $v_i=C$ correspond to the same dichotomous weak order, namely
the empty one. 

A partial order $\succ'$ over $C$ is a {\em refinement}
of a partial order $\succ$ over $C$ if for every $c,d\in C$
it holds that $c\succ d$ implies $c\succ' d$. A 
profile $\calP'=(\succ_1,\dots,\succ_n)$ of total orders
is a {\em refinement} of a dichotomous profile 
$\calP=(v_1,\dots, v_n)$ if $\succ_i$ is a refinement
of $\succ_{v_i}$ for each $i=1,\dots,n$. 

Let $\lhd$ be a total order over $C$.
A total order $\succ$ over $C$ is said to be {\em single-peaked
with respect to $\lhd$} if for any triple of candidates 
$a,b,c\in C$ with $a\lhd b\lhd c$ or $c\lhd b\lhd a$ 
it holds that $a\succ b$ implies $b\succ c$. A profile 
$\calP$ of total orders over $C$ is said to be {\em single-peaked}
if there exists a total order $\lhd$ over $C$ such that all orders in $\calP$
are single-peaked with respect to $\lhd$. 

A profile $\calP=(\succ_1,\dots,\succ_n)$ of total orders over $C$
is said to be {\em single-crossing with respect to the given 
order of votes} if for every pair of candidates $a,b\in C$
such that $a\succ_1 b$ all votes where $a$ is preferred to $b$
precede all votes where $b$ is preferred to $a$;
$\calP$ is {\em single-crossing} if the votes in $\calP$
can be permuted so that it becomes single-crossing with respect
to the resulting order of votes.

A profile $\calP=(\succ_1,\dots,\succ_n)$ of total orders over $C$
is said to be {\em $1$-Euclidean} if there is a mapping $\rho$ of
voters and candidates into the real line such that $c\succ_i d$
if and only if $|\rho(i)-\rho(c)|<|\rho(i)-\rho(d)|$.
A $1$-Euclidean profile is both single-peaked and single-crossing.

\section{Preference Restrictions}\label{sec:constraints}

We will now define a number of constraints that a dichotomous profile may satisfy.
Most of these constraints can be divided into two basic groups: 
those that are based on ordering voters and/or candidates on the line
and requiring the votes to respect this order (this includes
VEI, VI, CEI, CI, DE, and DUE), and those that are based
on viewing votes as weak orders and asking if there is a 
single-peaked/single-crossing/1-Euclidean profile of total orders
that refines the given profile 
(this includes PSP, PSC, and PE); we remark that the study of the latter type
of constraints was initiated by \citet{La14}. 
We will also consider constraints that are based on partitioning voters/candidates
(2PART and PART), as well as two constraints (WSC and SSC) that have been introduced
in a recent paper of \citet{EFLO15} in order to 
understand the best way of extending the single-crossing property 
to weak orders.

Fix a profile $\calP=(v_1,\dots,v_n)$ over $C$.

\begin{enumerate}
\item
\emph{$2$-partition (2PART)}:
We say that $\calP$ satisfies 2PART
if $\calP$ contains only two distinct votes $v,v'$,
and $v\cap v'=\emptyset$, $v\cup v'=C$.
\item
\emph{Partition (PART)}:
We say that $\calP$ satisfies PART 
if $C$ can be partitioned into pairwise disjoint subsets $C_1,\dots,C_\ell$
such that $\{v_1,\dots,v_n\}=\{C_1,\dots,C_\ell\}$ (i.e., each voter in $\calP$
approves one of the sets $C_1,\dots,C_\ell$).
Note that this constraint contains as a special case profiles where every voter approves of exactly one candidate.
\item 
\emph{Voter Extremal Interval (VEI)}: 
We say that $\calP$ satisfies VEI if the voters in $\calP$
can be reordered so that for every candidate $c$ 
the voters that approve $c$ form a prefix or a suffix of the ordering.
Equivalently, both the voters who approve $c$ and the voters
who disapprove $c$ form an interval of that ordering.
See Figure~\ref{fig:vei} for an example.
\begin{figure}
  \centering
  \hfill
  \begin{minipage}[b]{0.45\textwidth}
	\input{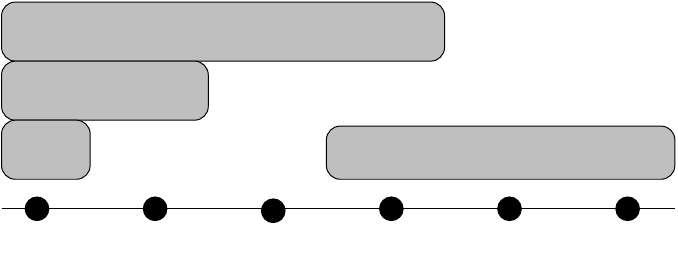_t}
	\caption{Voter Extremal Interval}
	\label{fig:vei}
  \end{minipage}
  \begin{minipage}[b]{0.02\textwidth}\ 
  \end{minipage}
  \begin{minipage}[b]{0.45\textwidth}
	\input{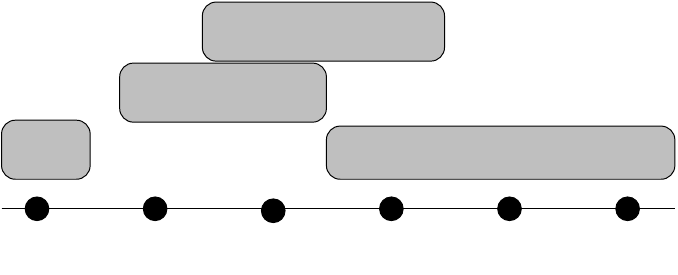_t}
	\caption{Voter Interval}
	\label{fig:vi}
  \end{minipage}
\end{figure}

\item 
\emph{Voter Interval (VI)}: 
We say that $\calP$ satisfies VI if the voters in $\calP$
can be reordered so that for every candidate $c$
the voters that approve $c$ form an interval of that ordering.
See Figure~\ref{fig:vi} for an example.

\emph{Candidate Extremal Interval (CEI)}: 
We say that $\calP$ satisfies CEI if candidates in $C$
can be ordered so that each of the sets $v_i$
forms a prefix or a suffix of that ordering.
Equivalently, both $v_i$ and $\overline{v_i}$
form an interval of that ordering.
See Figure~\ref{fig:cei} for an example.
\begin{figure}[b]
  \centering
  \hfill
  \begin{minipage}[b]{0.45\textwidth}
	\input{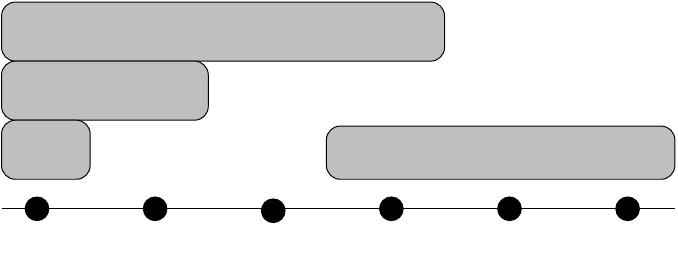_t}
	\caption{Candidate Extremal Interval}
	\label{fig:cei}
  \end{minipage}
  \begin{minipage}[b]{0.02\textwidth}\ 
  \end{minipage}
  \begin{minipage}[b]{0.45\textwidth}
	\input{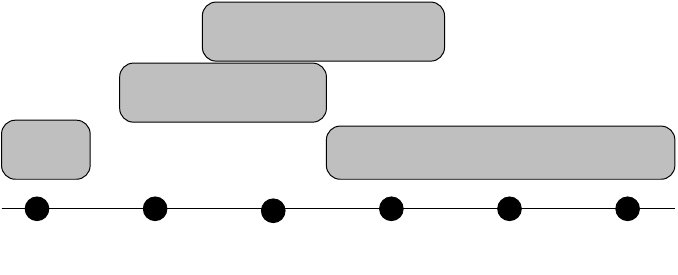_t}
	\caption{Candidate Interval}
	\label{fig:ci}
  \end{minipage}
\end{figure}
\item 
\emph{Candidate Interval (CI)}: 
We say that $\calP$ satisfies CI if candidates in $C$
can be ordered so that each of the sets $v_i$
forms an interval of that ordering.
See Figure~\ref{fig:ci} for an example.
\item
\emph{Dichotomous Uniformly Euclidean (DUE)}: 
We say that $\calP$ satisfies DUE if there is a mapping
$\rho$ of voters and candidates into the real line and a radius $r$
such that for every voter $i$ it holds that
$v_i=\{c: |\rho(i)-\rho(c)|\le r\}$.
\item
\emph{Dichotomous Euclidean (DE)}:
We say that $\calP$ satisfies DE if there is a mapping
$\rho$ of voters and candidates into the real line
such that for every voter $i$ there exists a radius $r_i$
with $v_i=\{c: |\rho(i)-\rho(c)|\le r_i\}$.
\item 
\emph{Possibly single-peaked (PSP)}: 
We say that $\calP$ satisfies PSP if 
there is a single-peaked profile of total orders $\calP'$ that is a refinement of $\calP$.
\item  
\emph{Possibly single-crossing (PSC)}:
We say that $\calP$ satisfies PSC if 
there is a single-crossing profile of total orders $\calP'$ that is a refinement of $\calP$.
\item 
\emph{Possibly Euclidean (PE)}:
We say that $\calP$ satisfies PE if
there is a $1$-Euclidean profile of total orders $\calP'$ that is a refinement of $\calP$. 
\item
\emph{Seemingly single-crossing (SSC)}:
We say that $\calP$ satisfies SSC
if the voters in $\calP$ can be reordered so that
for each pair of candidates $a,b\in C$ it holds that
either all votes $v_i$ with $a\in v_i$, $b\not\in v_i$
precede all votes $v_j$ with $a\not\in v_j$, $b\in v_j$
or vice versa.
\item
\emph{Weakly single-crossing (WSC)}:
We say that $\calP$ satisfies WSC if
the voters in $\calP$ can be reordered so that
for each pair of candidates $a,b\in C$ it holds that
each of the vote sets $V_1=\{v_i: a\in v_i, b\not\in v_i\}$,
$V_2=\{v_i: a\not\in v_i, b\in v_i\}$, $V_3=\{v\in\calP: v\not\in V_1\cup V_2\}$
forms an interval of this ordering, with $V_3$ appearing between $V_1$ and $V_2$. 
\end{enumerate}


\subsection{Relations}
The relationships among the properties defined above are depicted 
in Figure~\ref{fig:relations}, where arrows indicate containment, i.e., more restrictive notions 
are at the top. All these containments are strict.

\begin{figure}
\begin{center}
\begin{tikzpicture}[scale=2]

\def \n {5}
\def \radius {3cm}
\def \margin {8} 

\node (top) [draw] at (2,4) {2PART};
\node (part) [draw] at (3,3.5) {PART};
\node (z) [draw] at (2,2) {PSC=SSC};
\node (a) [draw] at (0.75,3.5) {VEI};
\node (c) [draw] at (2.25,3.5) {CEI};
\node (b) [draw] at (1.5,3.5) {WSC};
\node (e) [draw] at (2,3) {DUE};
\node (f) [draw] at (2.75,2.5) {CI=DE=PSP=PE};
\node (h) [draw] at (1.25,2.5) {VI};

\draw[->] (top) -- (a);
\draw[->] (top) -- (c);
\draw[->] (top) -- (b);
\draw[->] (top) -- (part);
\draw[->] (part) -- (e);
\draw[->] (f) -- (z);
\draw[->,dashed] (b) -- (a) ;
\draw[->,dashed] (b) -- (c);
\draw[->] (a) -- (e);
\draw[->] (b) -- (e);
\draw[->] (c) -- (e);
\draw[->] (e) -- (f);
\draw[->] (e) -- (h);
\draw[->] (h) -- (z);
\end{tikzpicture}
\end{center}
\caption{Relations between notions of structure. Dashed lines indicate that the respective containment 
holds only subject to additional conditions.}
\label{fig:relations}
\end{figure}
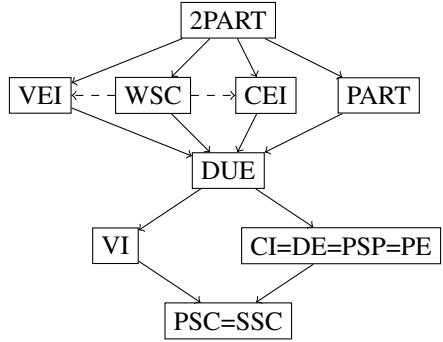

The four arrows at the top level of the diagram are immediate:
any profile with at most two distinct votes 
where each candidate is approved in at least one of these votes
satisfies VEI, CEI and WSC, and by definition 2PART is a special case of PART.

To understand the arrows in the next level,
we first characterize the dichotomous profiles that are weakly single-crossing.

\begin{lemma}\label{lem:WSC-char}
A dichotomous profile $\calP$ satisfies WSC if and only if 
there exist three votes $u, v, w$ such that
\begin{itemize}
\item[(1)] 
for every $v_i\in\calP$ it holds that $\succ_{v_i}\in\{\succ_u,\succ_v,\succ_w\}$, and 
\item [(2)]
$\succ_v$ is equal to either $\succ_{u\cap w}$ or $\succ_{u\cup w}$.
\end{itemize}
\end{lemma}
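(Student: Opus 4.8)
The plan is to recast WSC in terms of the distinct weak orders occurring in $\calP$. First, in any reordering of the voters witnessing WSC, all voters inducing the same weak order occupy a contiguous block: if some weak order $O$ appeared, then a distinct $O'$, then $O$ again, then picking a pair $\{a,b\}$ on which $O$ and $O'$ disagree, the set among $V_1,V_2,V_3$ for that pair that contains the two $O$-occurrences fails to be an interval. Hence WSC holds iff the distinct weak orders $O_1,\dots,O_k$ of $\calP$ can be listed in some order so that for every pair $\{a,b\}$ the induced sequence of states (for each $O_j$: $a\succ_{O_j}b$, or $b\succ_{O_j}a$, or $a,b$ tied) is monotone along the list --- a run of ``$a$ over $b$'', then a run of ties, then a run of ``$b$ over $a$'', or the reverse. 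Equivalently, every interior order of the list lies ``between'' the two endpoints on every pair, where $O^*$ lies between $O$ and $O'$ on a pair if its state there is weakly between those of $O$ and $O'$ on the chain ``$a$ over $b$'' $>$ ``tied'' $>$ ``$b$ over $a$''.

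The technical core is a characterisation of betweenness. Fix weak orders $O,O'$ with approval-set representatives $S,S'$. On one hand, both $\succ_{S\cap S'}$ and $\succ_{S\cup S'}$ lie between $O$ and $O'$ on every pair: for $\cap$ this is a sign computation ($\mathrm{sign}(\alpha\gamma-\beta\delta)$ is between $\mathrm{sign}(\alpha-\beta)$ and $\mathrm{sign}(\gamma-\delta)$ for $0/1$ values $\alpha=[a\in S],\beta=[b\in S],\gamma=[a\in S'],\delta=[b\in S']$), and $\cup$ follows by complementation (which preserves WSC and interchanges $\cap$ with $\cup$). On the other hand, if $O^*\notin\{O,O'\}$ lies between $O$ and $O'$ on every pair, then $\succ_{O^*}\in\{\succ_{S\cap S'},\succ_{S\cup S'}\}$: $O^*$ must agree with $O$ and with $O'$ on every pair on which $O$ and $O'$ agree; applying this inside each of $S\cap S'$, $\overline{S\cup S'}$, $S\setminus S'$, $S'\setminus S$ shows $O^*$ is ``all or nothing'' on each of these four sets, and applying it to pairs straddling $S\cap S'$ and $\overline{S\cup S'}$ forces $S\cap S'\subseteq O^*\subseteq S\cup S'$, leaving only $O^*\in\{S\cap S',S,S',S\cup S'\}$. (When $S\cap S'=\emptyset$ or $S\cup S'=C$ a mild variant of the same argument applies, using $\succ_\emptyset=\succ_C$.)

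These two facts yield the ``if'' direction quickly. If $\calP$ has at most two distinct weak orders, grouping voters by weak order is already a WSC witness. If it has exactly three, use complementation and $\succ_\emptyset=\succ_C$ to reduce to the case $v=u\cap w$ as sets; then list the voters as those inducing $\succ_u$, then those inducing $\succ_v$, then those inducing $\succ_w$ --- by the first fact $\succ_v$ lies between $\succ_u$ and $\succ_w$ on every pair, so every pair's state sequence is monotone and $\calP$ is WSC. For the ``only if'' direction in the case of at most two distinct weak orders, take $u,w$ inducing them and $v=u\cap w$; then (1) holds, and (2) holds since $\succ_v=\succ_{u\cap w}$.

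It remains to show a WSC profile has at most three distinct weak orders and to pin down the middle one. Take a WSC witness with weak-order blocks $O_1,\dots,O_k$. If $k\ge 3$ then $O_1$ is not the empty order: otherwise $O_2$, being between $O_1$ and $O_3$ on every pair, would satisfy $O_2\subseteq O_3$ as a relation, which for dichotomous weak orders forces $O_2=O_3$ or $O_2$ empty, contradicting distinctness; similarly $O_k$ is nonempty. Suppose $k\ge 4$. Then $O_2$ and $O_3$ are distinct orders both between $O_1$ and $O_4$, so the converse betweenness fact gives $\{O_2,O_3\}=\{\succ_{S_1\cap S_4},\succ_{S_1\cup S_4}\}$ with these two distinct; up to reversing the block order (also a WSC witness) we may take $O_2=\succ_{S_1\cap S_4}$. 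But $O_2$ is also between $O_1$ and $O_3=\succ_{S_1\cup S_4}$, whereas $\succ_{S_1\cap S_4}$ is \emph{not} between $\succ_{S_1}$ and $\succ_{S_1\cup S_4}$: a pair with one candidate in $S_1\cap S_4$ and one in $S_1\setminus S_4$ is tied by both $\succ_{S_1}$ and $\succ_{S_1\cup S_4}$ but not by $\succ_{S_1\cap S_4}$ --- a contradiction (the degenerate cases $S_1\cap S_4=\emptyset$ or $S_1\cup S_4=C$ are ruled out similarly, using that an interior block lying between the empty order and another block is contained in the latter). Hence $k\le 3$. For $k=3$, $O_2$ is between the nonempty $O_1,O_3$, so $\succ_{O_2}\in\{\succ_{S_1\cap S_3},\succ_{S_1\cup S_3}\}$; taking $u=S_1$, $w=S_3$, and $v$ any set inducing $O_2$ gives (1) and (2), and the case $k\le2$ was handled above. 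I expect the main difficulty to be the converse betweenness claim and its deployment to bound $k$, together with the bookkeeping required throughout by the empty weak order being represented by both $\emptyset$ and $C$.
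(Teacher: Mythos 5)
Your proof is correct and, while it turns on the same central computation as the paper's (any dichotomous weak order sandwiched between $\succ_u$ and $\succ_w$ must be $\succ_{u\cap w}$ or $\succ_{u\cup w}$, and these two cannot both occur non-trivially without breaking WSC), it is organized differently and is substantially more complete than the paper's sketch. The paper fixes $u=v_1$, $w=v_n$, partitions the \emph{candidates} into the four cells $u\cap w$, $u\setminus w$, $w\setminus u$, $\overline{u}\cap\overline{w}$, observes that each cell moves as a block through all votes, and reads off the possible middle votes directly, deferring the degenerate cases (``the analysis is similar''). You instead work on the voter side: you show that voters inducing the same weak order occupy a contiguous block in any WSC witness, isolate a reusable betweenness lemma (the dichotomous weak orders lying weakly between $\succ_S$ and $\succ_{S'}$ on every pair are exactly those induced by $S\cap S'$ and $S\cup S'$), and deploy it both to verify the ``if'' direction explicitly (which the paper dismisses as ``easy to check'') and to bound the number of distinct weak-order blocks by three via the $k\ge 4$ contradiction. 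You also track the trivial-vote issue ($\succ_\emptyset=\succ_C$) throughout, which is precisely where the paper's sketch is thinnest. One small imprecision: your assertion that monotonicity of every pair's state sequence is \emph{equivalent} to every interior order lying between the two endpoints is false for lists of length at least four (states $A,T,A,B$ satisfy the betweenness condition but are not monotone); fortunately you only invoke the betweenness-implies-monotonicity direction for length-three lists, where it does hold, so nothing breaks --- but that sentence should be weakened to a one-way implication.
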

\begin{proof}[Proof sketch]
It is easy to check that every profile satisfying (1)--(2) satisfies WSC. 
For the converse direction,
assume without loss of generality that
the ordering of the votes $v_1\sqsubset v_2\sqsubset \dots\sqsubset v_n$ 
witnesses that $\calP$ satisfies WSC.
Let $u=v_1$, $w=v_n$, and  
set $C_1=u\cap w$, $C_2=u\cap \overline{w}$, $C_3=\overline{u}\cap w$,
$C_4= \overline{u}\cap \overline{w}$. The WSC property implies that 
for every $\ell=1,2,3,4$, every $a,b\in C_\ell$, and every $v_i\in\calP$ we have $a\in v_i$ if and only
if $b\in v_i$, i.e., candidates in each $C_\ell$ occur as a block in all votes.
Note that $v_1=u=C_1\cup C_2$, $v_n=w=C_1\cup C_3$.

Suppose that $C_1,C_4\neq\emptyset$. Then $C_1\subseteq v_i$, $C_4\subseteq \overline{v_i}$
for all $v_i\in\calP$. 
Indeed, fix a pair of candidates $a\in C_1$, $b\in C_4$. Both the first and the last voter 
strictly prefer $a$ to $b$, and therefore so do all other voters.
Thus, if $\calP$ contains a vote $v_i\neq u,w$, it has to be the case that
$v_i=C_1=u\cap w$ or $v_i=C_1\cup C_2\cup C_3=u\cup w$; moreover, if both of these
votes occur simultaneously and are distinct from each other and $u,w$
(i.e., $C_2,C_3\neq\emptyset$), the WSC property is violated. 
Indeed, suppose that $v_i=C_1$, $v_j=C_1\cup C_2\cup C_3$.
Fix candidates $a\in C_1$, $b\in C_4$. 
If $v_i$ appears before $v_j$, consider a candidate $c\in C_2$:
we get a contradiction as voters $v_1$ and $v_j$
are indifferent between $a$ and $c$, but $v_i$ strictly prefers $a$ to $c$.
If $v_i$ appears after $v_j$, consider a candidate $d\in C_3$:
we get a contradiction as voters $v_1$ and $v_i$
are indifferent between $d$ and $b$, but $v_j$ strictly prefers $d$ to $b$.
When 
$C_1$ or $C_4$ is empty, the analysis is similar; note, however, that trivial votes 
($v_i=C$ and $v_i=\emptyset$) may alternate arbitrarily without violating the WSC property
(this is why the lemma is stated in terms of weak orders rather than approval votes).
\end{proof}

We can now show that under mild additional conditions 
(no trivial voters/candidates) WSC implies VEI and CEI. 

\begin{proposition}\label{prop:VEI-char}
Let $\calP$ be a dichotomous profile that either contains only two distinct votes or contains no 
vote $v_i$ with $v_i=\emptyset$. If $\calP$ satisfies WSC, then it satisfies VEI.
\end{proposition}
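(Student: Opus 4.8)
The plan is to peel off a degenerate case, push everything else through the structural characterization of WSC profiles in Lemma~\ref{lem:WSC-char}, and then build a VEI order by blocking voters according to which of the three admissible weak orders their vote induces.

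First I would dispose of the case where $\calP$ has at most two distinct votes, say $x$ and $y$: order the voters so that all the ones casting $x$ precede all the ones casting $y$, and note that for each candidate $c$ the set of voters approving $c$ is $\emptyset$, the $x$-block, the $y$-block, or the whole electorate, according to whether $c$ lies in neither, exactly one, or both of $x,y$; each of these is a prefix or a suffix. So in the remaining case $\calP$ has at least three distinct votes, whence the hypothesis tells us $\calP$ contains no empty vote; I would moreover use that $\calP$ contains no \emph{trivial} vote at all (neither $\emptyset$ nor $C$), so that every vote induces a non-empty weak order and is recoverable from it. (This is the one place where trivial votes need genuine care, and I flag it below as the main obstacle.)

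Next I would apply Lemma~\ref{lem:WSC-char} to obtain votes $u,v,w$ with properties (1)--(2). From (1), every vote of $\calP$ induces one of $\succ_u,\succ_v,\succ_w$; since $\calP$ has at least three distinct votes, all non-trivial, there are exactly three distinct non-trivial weak orders present, so $u,v,w$ are precisely the three distinct votes of $\calP$, each non-trivial. Property~(2) says $\succ_v$ equals $\succ_{u\cap w}$ or $\succ_{u\cup w}$; since $v$ is non-trivial and two approval votes inducing the same non-empty weak order must coincide as sets, I would upgrade this to the set identity $v=u\cap w$ or $v=u\cup w$, hence $u\cap w\subseteq v\subseteq u\cup w$ in either case. \textbf{This upgrade from a weak-order identity to an honest set inclusion is the step I expect to be the main obstacle}: a trivial vote could agree with $\succ_{u\cap w}$ or $\succ_{u\cup w}$ while neither containing nor being contained in anything, which is exactly the pathology that forced Lemma~\ref{lem:WSC-char} to be stated for weak orders, and ruling it out is precisely what the hypothesis is for.

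Finally I would write $B_u,B_v,B_w$ for the sets of voters casting $u,v,w$ respectively --- these partition the electorate by (1), and voters within a block can be reordered arbitrarily since the approver set of any candidate meets a block in all of it or in none of it --- and order the voters as $B_u$, then $B_v$, then $B_w$. For a candidate $c$ I would then check four cases using $u\cap w\subseteq v\subseteq u\cup w$: if $c\in u\cap w$ then $c\in v$ too and every voter approves $c$; if $c\notin u\cup w$ then $c\notin v$ too and no voter approves $c$; if $c\in u\setminus w$ then the approvers of $c$ are $B_u$ or $B_u\cup B_v$, a prefix; and if $c\in w\setminus u$ then the approvers of $c$ are $B_w$ or $B_v\cup B_w$, a suffix. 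In every case the approver set is a prefix or a suffix of the ordering, which is exactly what VEI requires.
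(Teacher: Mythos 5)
Your overall route is the same as the paper's: dispose of the two-vote case directly, invoke Lemma~\ref{lem:WSC-char}, upgrade the weak-order identity $\succ_v\in\{\succ_{u\cap w},\succ_{u\cup w}\}$ to the set identity $v\in\{u\cap w,u\cup w\}$, and read off a VEI order from the three voter blocks (the paper uses the WSC witnessing order of voters instead of re-blocking, which is only a cosmetic difference). Your two-vote case and your final four-case check are correct. However, the step you flagged as the main obstacle is a genuine gap, and it is not one the hypothesis closes for you: the hypothesis forbids the vote $\emptyset$ but says nothing about the vote $C$, and a vote equal to $C$ induces the empty weak order, which coincides with $\succ_{u\cap w}$ whenever $u\cap w=\emptyset$ (and with $\succ_{u\cup w}$ whenever $u\cup w=C$). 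So ``no empty vote'' does not make every vote recoverable from its weak order, and the inclusion $u\cap w\subseteq v\subseteq u\cup w$ on which your block argument rests can fail.

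Moreover, the gap cannot be patched under the stated hypothesis, because the statement is then false. Take $C=\{a,b,c\}$ and $\calP=(\{c\},\{a,b,c\},\{a\})$: this profile has three distinct votes, none empty, and is WSC in the given order (for the pair $(a,c)$ the blocks appear as $V_2,V_3,V_1$; for every other pair one of $V_1,V_2$ is empty). It is not VEI: the sole approver of $b$ is voter $2$, who must therefore be placed first or last, but then the approver sets $\{1,2\}$ of $c$ and $\{2,3\}$ of $a$ cannot both be prefixes or suffixes. This is just the candidate-wise complement of the paper's own example $(\{a,b\},\emptyset,\{b,c\})$ showing that $\emptyset$ must be excluded; since complementing all votes preserves both WSC and VEI, the vote $C$ must be excluded as well. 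With the strengthened hypothesis ``no trivial vote'' your argument goes through exactly as written. For what it is worth, the paper's proof makes the same silent leap---it asserts $v\in\{u\cup w,u\cap w\}$ directly from Lemma~\ref{lem:WSC-char} and the absence of $\emptyset$---so you located the right pressure point even though your resolution of it (``ruling it out is precisely what the hypothesis is for'') does not hold up.
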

\begin{proof}
Assume without loss of generality that $\calP$ 
satisfies WSC with respect to an ordering of voters $v_1\sqsubset\dots\sqsubset v_n$,
and let $u=v_1$, $w=v_n$.
We will show that $\calP$ satisfies VEI with respect to $\sqsubset$. If $\calP$
only contains two distinct votes, this claim is immediate, so assume
that $\emptyset\not\in\calP$. Consider a vote $v\in\calP$ that is distinct from $u$ and $w$.
Since $\emptyset\not\in\calP$, by Lemma~\ref{lem:WSC-char} there exist $i,j$ with $1<i<j<n$ such that 
$v_k=u$ for $k<i$, $v_k=v$ for $k=i,\dots,j$, $v_k=w$ for $k>j$, and $v\in\{u\cup w,u\cap w\}$. 
Suppose first that $v=u\cap w$. Then candidates in $u\cap w$ are approved by all voters,
candidates in $u\setminus w$ are approved by the first $i-1$ voters,
candidates in $w\setminus u$ are approved by the last $n-j$ voters,
and the remaining candidates are not approved by anyone.
On the other hand, if $v=u\cup w$, then candidates in $u\cap w$ are approved by all voters,
candidates in $u\setminus w$ are approved by the first $j$ voters,
candidates in $w\setminus u$ are approved by the last $n-i+1$ voters,
and the remaining candidates are not approved by anyone.  
\end{proof}

\noindent The condition that the profile must not contain $\emptyset$ is 
necessary: the profile $(\{a,b\}, \emptyset, \{b,c\})$ 
satisfies WSC, but not VEI.

\begin{proposition}\label{prop:CEI-char}
Let $\calP$ be a dichotomous profile that either contains only two distinct votes or in which every 
candidate is approved in at least one vote and disapproved in at least one vote. If $\calP$ 
satisfies WSC, then it satisfies CEI.
\end{proposition}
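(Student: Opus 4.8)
The plan is to follow the proof of Proposition~\ref{prop:VEI-char} but to construct an ordering of the \emph{candidates} rather than of the voters, again using the characterization of WSC profiles in Lemma~\ref{lem:WSC-char} as the main tool. I would first dispose of the case in which $\calP$ contains at most two distinct votes $v,v'$: one checks directly that $C$ can be ordered by the blocks $v\setminus v'$, $v\cap v'$, $v'\setminus v$, $\overline{v}\cap\overline{v'}$ so that $v$ forms a prefix and $v'$ a suffix. So from now on I would assume that every candidate of $C$ is approved by some voter and disapproved by some voter.

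Next I would apply Lemma~\ref{lem:WSC-char} to obtain votes $u,v,w$ with $\succ_{v_i}\in\{\succ_u,\succ_v,\succ_w\}$ for every $v_i\in\calP$ and with $\succ_v$ equal to $\succ_{u\cap w}$ or $\succ_{u\cup w}$. If $u$ or $w$ is a trivial vote, then $\calP$ has at most one nontrivial weak order and CEI is immediate (order the unique nontrivial approval set, if any, as a prefix and its complement as a suffix); so I may assume that $u$ and $w$ are nontrivial, and I set $C_1=u\cap w$, $C_2=u\setminus w$, $C_3=w\setminus u$, $C_4=\overline{u}\cap\overline{w}$. The crux is to use the hypothesis on $\calP$ to eliminate one of the two ``inner'' blocks $C_1,C_4$. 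Suppose $\succ_v=\succ_{u\cap w}$. If $u\cap w\neq\emptyset$, then $\succ_{u\cap w}$ is nontrivial, so --- as $\succ_u,\succ_{u\cap w},\succ_w$ are then all nontrivial --- every vote of $\calP$ equals $u$, $u\cap w$, or $w$ as a set, and therefore contains every candidate of $C_1$; this contradicts the assumption that each candidate is disapproved somewhere, so $C_1=\emptyset$. Symmetrically, if $\succ_v=\succ_{u\cup w}$, then $u\cup w=C$ (i.e.\ $C_4=\emptyset$), since otherwise every candidate of $\overline{u\cup w}$ would be disapproved by every vote. In either case $\succ_v$ is the empty weak order, so the only weak orders occurring in $\calP$ are $\succ_u$, $\succ_w$, and the empty one.

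Finally I would order the candidates block by block as $C_2,C_1,C_4,C_3$. Since one of $C_1,C_4$ is empty, the blocks in order are $u$, $\overline{u\cup w}$, $w$ when $C_1=\emptyset$, and $u\setminus w$, $u\cap w$, $w\setminus u$ when $C_4=\emptyset$; in both cases $u=C_1\cup C_2$ is a prefix and $w=C_1\cup C_3$ is a suffix of this ordering, while a trivial vote ($\emptyset$ or $C$) is at once a prefix and a suffix. Since $u,w$ are nontrivial, every vote of $\calP$ equals $u$, equals $w$, or is trivial; hence every vote is a prefix or a suffix, i.e.\ $\calP$ satisfies CEI.

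The step I expect to be the main obstacle is the one in the second paragraph that extracts ``$\succ_v$ is trivial'' from Lemma~\ref{lem:WSC-char}: since that lemma is phrased in terms of weak orders rather than approval votes --- precisely the subtlety flagged in its proof --- a vote realizing $\succ_u$, $\succ_v$, or $\succ_w$ need not be the set $u$, $v$, or $w$ when that set is $\emptyset$ or $C$, so the ``trivial'' configurations of $u,v,w$ must be separated out first (which is why I split off the case where $u$ or $w$ is trivial). Once that bookkeeping is in place, what remains is a short case analysis parallel to the one in Proposition~\ref{prop:VEI-char}.
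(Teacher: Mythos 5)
Your handling of the main case (every candidate approved somewhere and disapproved somewhere) is correct and follows essentially the same route as the paper's proof: invoke Lemma~\ref{lem:WSC-char}, argue that $\succ_v$ must be the empty weak order (so that one of $u\cap w$, $\overline{u}\cap\overline{w}$ vanishes), and order the candidates block by block so that $u$ is a prefix and $w$ is a suffix. If anything, you are more careful than the paper, which dismisses the presence of trivial votes with ``the claim is obvious'' and then derives \emph{both} $u\cap w=\emptyset$ and $\overline{u}\cap\overline{w}=\emptyset$; you correctly observe that one of the two suffices for the block ordering to work, and you keep track of the weak-order-versus-approval-set distinction throughout, which is exactly the subtlety that Lemma~\ref{lem:WSC-char} warns about.

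The two-distinct-votes case, however, is broken. Your claim that ordering $C$ as $v\setminus v'$, $v\cap v'$, $v'\setminus v$, $\overline{v}\cap\overline{v'}$ makes $v$ a prefix \emph{and} $v'$ a suffix fails whenever all four blocks are non-empty: $v'$ then occupies the two middle blocks and is neither a prefix nor a suffix. This is not a fixable slip in your write-up but a defect of the statement itself: the profile $(\{a,b\},\{b,c\})$ over $\{a,b,c,d\}$ --- which the paper itself exhibits immediately after the proposition as a WSC profile that is not CEI --- has exactly two distinct votes, so it satisfies the first disjunct of the hypothesis and refutes it. (The ``two distinct votes'' escape clause is genuinely sufficient for the VEI analogue in Proposition~\ref{prop:VEI-char}, where any two-vote profile is trivially VEI, but for CEI a two-vote profile $(v,v')$ works only if the votes are nested, disjoint, or jointly cover $C$.) The paper's own proof shares this blind spot, since its case analysis only covers profiles containing a trivial vote and profiles satisfying the second disjunct. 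So the repair belongs in the hypothesis rather than in your argument; once the first disjunct is strengthened or dropped, your proof of the remaining case stands.
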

\begin{proof}
Suppose that $\calP$ is WSC with respect to some ordering of voters;
let $u$ and $w$ be, respectively, the first and the last vote in this ordering.
If $\calP$ contains a trivial vote, it contains at most two non-trivial votes,
in which case the claim is obvious. Thus, assume that it contains no trivial votes. 
Then we have $u\cap w=\emptyset$ (any candidate in $u\cap w$ would be approved by all voters)
and $\overline{u}\cap\overline{w}=\emptyset$ (any candidate in $\overline{u}\cap\overline{w}$
 would be disapproved by all voters). It is now easy to see that ordering the candidates
so that all candidates approved by $u$ precede all candidates approved by $w$
witnesses that $\calP$ is CEI.
\end{proof}

\noindent To see that conditions of Proposition~\ref{prop:CEI-char} are necessary,
consider the profile $(\{a,b\}, \{b,c\})$ over $\{a,b,c,d\}$
and the profile $(\{a,b\}, \{b\}, \{b,c\})$ over $\{a,b, c\}$:
both of these profiles satisfy WSC, but not CEI. 

Interestingly, requiring a dichotomous profile to satisy WSC, CEI and VEI simultaneously,
turns out to be very demanding: we obtain 2-partition profiles. 

\begin{proposition}
A dichotomous profile is WSC, CEI and VEI if and only if it is a $2$-partition.
\end{proposition}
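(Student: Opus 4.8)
The plan is to prove the two implications separately; the substantive direction is that WSC, CEI and VEI together collapse to a $2$-partition. The ($\Leftarrow$) direction is immediate from the observations made just before Lemma~\ref{lem:WSC-char}: a $2$-partition profile contains at most two distinct votes and each candidate is approved in one of them, hence it satisfies VEI, CEI and WSC. For the ($\Rightarrow$) direction, suppose $\calP$ satisfies all three properties. I would start by applying Lemma~\ref{lem:WSC-char} to obtain votes $u,v,w$ such that every vote of $\calP$ has weak order in $\{\succ_u,\succ_v,\succ_w\}$ and $\succ_v\in\{\succ_{u\cap w},\succ_{u\cup w}\}$; choosing $u$ and $w$ to be the first and last votes of a witnessing WSC ordering lets us assume $u,w\in\calP$. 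Since replacing every vote by its complement preserves WSC, CEI, VEI and the property of being a $2$-partition — and interchanges the two options $\succ_{u\cap w}$ and $\succ_{u\cup w}$ for $\succ_v$ — we may further assume $\succ_v=\succ_{u\cap w}$.

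The first step is to show that only the weak orders $\succ_u$ and $\succ_w$ actually occur in $\calP$. Fix a CEI ordering $\lhd$ of the candidates, so that $u$ and $w$ are each a prefix or a suffix of $\lhd$. If $u$ and $w$ are both prefixes, or both suffixes, of $\lhd$, then one of them contains the other, so $u\cap w$ equals $u$ or $w$ and hence $\succ_v\in\{\succ_u,\succ_w\}$. If one of $u,w$ is a prefix and the other a suffix, then $u\cap w$ is a (possibly empty) ``middle'' block of $\lhd$: when it is nonempty and not all of $C$ it is neither a prefix nor a suffix of $\lhd$, so — being nontrivial — no vote of $\calP$ can equal $u\cap w$, and therefore no vote has weak order $\succ_v$. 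Either way every vote of $\calP$ has weak order $\succ_u$ or $\succ_w$, so $\calP$ has at most two distinct votes. (The corner case $u\cap w\in\{\emptyset,C\}$ gives a trivial ``middle vote'' and is dispatched using the profile's non-triviality hypotheses.)

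It then remains to analyse a profile with two distinct votes $v_1\neq v_2$ and to show $v_1\cap v_2=\emptyset$ and $v_1\cup v_2=C$, i.e.\ that $\{v_1,v_2\}$ is a partition of $C$. The key observation is that every candidate in $v_1\cap v_2$ is approved by all voters and every candidate in $C\setminus(v_1\cup v_2)$ is disapproved by all voters; I would then use the CEI and VEI orderings, together with non-triviality of votes and candidates, to rule such candidates out. This last step is the main obstacle: one must exclude ``overlapping'' two-vote profiles such as $(\{a,b\},\{b,c\})$ and ``nested'' ones such as $(\{a\},\{a,b\})$, as well as profiles with a candidate approved by everyone or by no one — in other words, verify that piling CEI and VEI on top of WSC really does force the three-vote WSC structure all the way down to a $2$-partition. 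Carefully tracking trivial votes and candidates through the two steps above is the remaining fiddly point.
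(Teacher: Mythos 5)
Your overall architecture is close to the paper's: both arguments hinge on Lemma~\ref{lem:WSC-char}, and your two tactical additions --- the complementation symmetry that lets you assume $\succ_v=\succ_{u\cap w}$, and the geometric use of the CEI axis to show that a nontrivial ``middle block'' $u\cap w$ cannot itself occur as a vote --- are sound and arguably cleaner than the paper's route, which instead derives $u\cup w=C$ and $u\cap w=\emptyset$ first and only then concludes that the middle vote $v$ is trivial and hence absent.

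The genuine gap is your second step, which you flag as ``the main obstacle'' and leave unexecuted. This is not a deferrable fiddly detail: as the statement stands, \emph{the step cannot be completed}, because the proposition is false without non-degeneracy assumptions that do not appear in it. The profile $(\{a\},\{b\})$ over $C=\{a,b,c\}$ satisfies WSC, CEI (order $a\lhd c\lhd b$) and VEI, yet $\{a\}\cup\{b\}\neq C$; the nested profile $(\{a\},\{a,b\})$ over $C=\{a,b\}$ likewise satisfies all three properties yet has $\{a\}\cap\{a,b\}\neq\emptyset$. So no amount of juggling the CEI and VEI orderings will rule out candidates approved by everyone or by no one. What is actually needed --- and what the paper's own proof silently assumes when it asserts that CEI yields ``every candidate is approved at least once'' and ``disapproved at least once'' (it does not) --- is a standing hypothesis of no trivial candidates (and no trivial votes, so that weak orders determine votes in your first step). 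Once that hypothesis is granted, your own ``key observation'' closes the argument in one line: a candidate in $v_1\cap v_2$ is approved by every voter and a candidate outside $v_1\cup v_2$ by none, contradicting non-triviality. You should therefore either import that hypothesis explicitly and finish step 2 immediately, or record the counterexamples above; leaving the crux as an acknowledged obstacle means the substantive direction is not actually proved.
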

\begin{proof}
It is immediate that a $2$-partition profile is WSC, CEI, and VEI.
For the converse direction, 
let $\calP$ be a CEI, VEI and WSC profile. By Lemma~\ref{lem:WSC-char}, $\calP$ 
contains at most three distinct votes $u,v,w$ with $v=u\cap w$ or 
$v=u\cup w$. Since $\calP$ is CEI, we know from Lemma~\ref{prop:CEI-char} that every 
candidate is approved at least once. Hence $u\cup w = C$. Furthermore, 
every candidate is disapproved at least once. 
Thus, $u\cap w=\emptyset$, since this intersection is also approved by $v$. 
Thus, $v$ is a trivial vote. This is possible because of 
Lemma~\ref{prop:VEI-char} and hence $v$ does not appear in $\calP$. We have 
shown that $\calP$ is a $2$-partition profile.
\end{proof}

Next, we will relate CEI and VEI to DUE.

\begin{proposition}\label{prop:CEI/VEI->DUE}
If a dichotomous profile $\calP$ 
satisfies CEI or VEI, then it satisfies DUE.
\end{proposition}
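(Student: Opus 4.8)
The plan is to construct an explicit embedding $\rho$ of voters and candidates into $\mathbb{R}$, together with a uniform radius $r$, directly from the interval structure guaranteed by CEI or VEI. I would handle the two cases separately, since CEI gives me an ordering of candidates while VEI gives me an ordering of voters, but the underlying idea is the same: place the objects that are "ordered" on a coarse integer grid, and then place the remaining objects (voters in the CEI case, candidates in the VEI case) at the midpoints of the appropriate grid intervals, choosing $r$ to be exactly half the grid spacing.

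\textbf{The CEI case.} Suppose the candidates are ordered as $c_1 \lhd c_2 \lhd \dots \lhd c_m$ so that each $v_i$ is a prefix or a suffix of this order. I would set $\rho(c_j) = 2j$ for each $j$, and $r = 1$. Now fix a voter $i$. If $v_i = \{c_1,\dots,c_k\}$ is a prefix (the suffix case is symmetric, and the empty vote can be sent far off to the left, beyond $\rho(c_1) - 1$), I place $\rho(i) = 2k+1$, the midpoint between $\rho(c_k) = 2k$ and $\rho(c_{k+1}) = 2k+2$. Then $|\rho(i) - \rho(c_j)| = |2k+1 - 2j| \le 1 = r$ exactly when $j \in \{k, k+1\}$ — wait, that is too coarse; instead I place $\rho(i) = 2k + \tfrac12$, so that $|\rho(i) - \rho(c_j)| \le 1$ iff $2j \in [2k - \tfrac12, 2k + \tfrac32]$ iff $j \le k$, which is precisely $c_j \in v_i$. (If $v_i = \emptyset$ put $\rho(i) = \rho(c_1) - 2$; if $v_i = C$ put $\rho(i) = \rho(c_1)$, which already works.) This gives $v_i = \{c : |\rho(i) - \rho(c)| \le r\}$ for every $i$, so $\calP$ satisfies DUE. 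A minor bookkeeping point: if two voters must be placed at the same real (same vote), that is fine — $\rho$ need not be injective.

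\textbf{The VEI case.} Suppose the voters are ordered $v_1 \sqsubset \dots \sqsubset v_n$ so that for each candidate $c$ the set of approving voters is a prefix or a suffix. I place $\rho(i) = 2i$ for each voter $i$ and again take $r = 1$. For a candidate $c$, the set of voters approving $c$ is a prefix $\{v_1,\dots,v_k\}$ or a suffix $\{v_k,\dots,v_n\}$ or empty; in the prefix case I set $\rho(c) = 2k + \tfrac12$ (between $\rho(v_k)$ and $\rho(v_{k+1})$), in the suffix case $\rho(c) = 2k - \tfrac12$, and if no voter approves $c$ I send $c$ off to $\rho(v_n)+2$. Then $|\rho(i)-\rho(c)| \le 1$ holds exactly for the voters in the required prefix/suffix, i.e. exactly when $c \in v_i$.

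\textbf{The main obstacle} is not any deep argument — both directions are elementary — but rather getting the constants right so that the "$\le r$" ball captures exactly the prefix/suffix and nothing more, and in particular handling the boundary/degenerate cases cleanly: trivial votes ($\emptyset$ and $C$), candidates approved or disapproved by nobody, and the fact that in CEI a single vote is simultaneously a prefix and a suffix so the choice of side is not forced (one must check consistency). The offset of $\tfrac12$ against an integer grid of spacing $2$ with $r=1$ is the cleanest way to ensure each grid point lies strictly inside or strictly outside the relevant ball, and pushing empty/full votes out to a far coordinate disposes of the degenerate cases. I would present the CEI construction in full and remark that the VEI construction is symmetric (swapping the roles of voters and candidates), since DUE is symmetric in voters and candidates.
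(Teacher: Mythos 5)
Your construction does not establish DUE: the pivotal claim ``$|\rho(i)-\rho(c_j)|\le 1$ iff $2j\in[2k-\tfrac12,2k+\tfrac32]$ iff $j\le k$'' is false. The condition $2k-\tfrac12\le 2j\le 2k+\tfrac32$ forces $j=k$, not $j\le k$. More fundamentally, the ball $\{x:|x-\rho(i)|\le r\}$ is an interval of length $2r=2$, so it cannot contain a prefix $\{c_1,\dots,c_k\}$ whose images under $\rho(c_j)=2j$ span a distance $2k-2$ once $k\ge 2$. Already the one-voter profile $(\{c_1,c_2\})$ breaks your embedding: the voter sits at $4.5$ with radius $1$ and ``approves'' only $c_2$. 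The same objection kills the VEI half (a candidate approved by the prefix $v_1,\dots,v_k$ would be approved only by $v_k$) and the aside that $\rho(i)=\rho(c_1)$ ``already works'' for $v_i=C$. The midpoint-plus-small-radius picture is the right one for CI~$\Rightarrow$~DE, where each voter carries her \emph{own} radius and can sit at the centre of her approval interval; for DUE the radius is shared across voters, so it cannot simultaneously be small enough to exclude $c_{k+1}$ from the middle of a short vote and large enough to cover a long one.

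The repair is to exploit the \emph{extremal} structure rather than the interval structure, which is what the paper does: keep the candidates at $\rho(c_j)=j$, take the radius \emph{large}, $r=m$, and push each voter off one end of the axis. A voter with prefix vote $\{c_1,\dots,c_k\}$ goes to $k-m$, so that $|\rho(i)-\rho(c_j)|=m-k+j\le m$ iff $j\le k$; a suffix vote $\{c_k,\dots,c_m\}$ goes to $k+m$; the empty vote goes to distance more than $m$ from everything. The VEI case is the mirror image, with voters on the grid and candidates displaced beyond the ends. As written, your argument proves (a special case of) CEI/VEI~$\Rightarrow$~DE, not the stated proposition.
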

\begin{proof}
Suppose first that $\calP$ satisfies CEI with respect to the ordering $c_1\lhd\dots\lhd c_m$
of candidates. Map the candidates into the real line by setting $\rho(c_i)=i$, and let $r=m$.
We can now place each voter $i$ to the left or to the right of all candidates at an appropriate
distance so that the set of candidates within distance $r$ from him coincides with $v_i$.
For VEI the argument is similar: if $\calP$ satisfies VEI with respect to the ordering 
$v_1\sqsubset\dots\sqsubset v_n$ of voters, we place voters on the real line according
to $\rho(i)=i$, let $r=n$, and place each candidate to the left or to the right of all voters
at an appropriate distance.
\end{proof}

The proof that WSC implies DUE is also based on our characterization of WSC preferences.

\begin{proposition}
If a dichotomous profile $\calP$
satisfies WSC, then it satisfies DUE.
\end{proposition}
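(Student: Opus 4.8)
The plan is to use the WSC characterization from Lemma~\ref{lem:WSC-char}: if $\calP$ satisfies WSC, then there are three votes $u,v,w$ such that every vote in $\calP$ induces the same dichotomous weak order as one of $u,v,w$, and moreover $\succ_v$ equals $\succ_{u\cap w}$ or $\succ_{u\cup w}$. So I only need to exhibit a single embedding $\rho$ of voters and candidates into $\mathbb{R}$ together with one radius $r$ that simultaneously realizes the (at most three) weak orders $\succ_u$, $\succ_v$, $\succ_w$, placing each voter according to which of the three orders it induces. Trivial votes ($\emptyset$ or $C$) can be handled at the end by pushing such voters very far away (for $\emptyset$) or making sure $r$ is large enough that one point at distance $0$ covers everything (for $C$), so I will first assume no vote is trivial.

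First I would fix the four blocks $C_1=u\cap w$, $C_2=u\setminus w$, $C_3=w\setminus u$, $C_4=\overline{u}\cap\overline{w}$; as noted in the proof of Lemma~\ref{lem:WSC-char}, candidates within each $C_\ell$ behave identically in every vote, so I can treat each block as a single ``super-candidate'' and place the actual candidates of $C_\ell$ in a tiny cluster. Consider the case $v=u\cap w=C_1$ (the case $v=u\cup w$ is symmetric, obtained by swapping the roles of ``approved'' and ``disapproved'', i.e.\ interchanging $C_1\leftrightarrow C_4$). Then $u$ approves $C_1\cup C_2$, $v$ approves $C_1$, $w$ approves $C_1\cup C_3$. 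I would lay the candidate clusters on the line in the order $C_2, C_1, C_3, C_4$ (say at coordinates roughly $-3,-1,1,3$, with each cluster having width $\ll 1$), and then choose one fixed radius $r$ (e.g. $r$ slightly larger than the distance from the $C_1$-cluster to the far end of the $C_3$-cluster but smaller than the distance to $C_4$). Now every $v$-voter is placed at the center of $C_1$: its ball of radius $r$ contains $C_1$ and $C_3$ but I need it to contain only $C_1$ — so instead I place $C_1, C_2, C_3$ more cleverly: put $C_1$ at $0$, $C_2$ at $-d$, $C_3$ at $+d$, $C_4$ at $\pm$ (far), choose $r$ between $d/2$ and $d$(ish) won't separate $C_2,C_3$ from $C_1$ symmetrically. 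The cleaner route: place a $v$-voter so that only $C_1$ is within $r$, a $u$-voter so that exactly $C_1\cup C_2$ is within $r$, and a $w$-voter so that exactly $C_1\cup C_3$ is within $r$, all with the \emph{same} $r$. This is possible by putting the clusters in the order $C_2, C_1, C_3, C_4$ at coordinates $0, t, 2t, 3t$ for suitable $t$, fixing $r=t+\varepsilon$, and placing: the $v$-voter just inside $C_1$ so its $r$-ball reaches neither $C_2$ (distance $>r$) nor $C_3$; the $u$-voter between $C_2$ and $C_1$ so its ball covers $C_2$ and $C_1$ but not $C_3$; the $w$-voter between $C_1$ and $C_3$ covering $C_1,C_3$ but not $C_2$ or $C_4$. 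Verifying these inequalities is the routine part; one checks a single consistent choice of $t$, $\varepsilon$, and cluster radii works, using that $C_1\neq\emptyset$ when $v$ is non-trivial (so the $v$-voter has something to approve) — and if some $C_\ell$ is empty the constraints only get easier.

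Finally I would address the degenerate cases: if $u\cap w=\emptyset$ or $u\cup w=C$ some blocks vanish and the construction simplifies; if $\calP$ has only one or two distinct non-trivial weak orders it is a sub-case of the above; and trivial votes are inserted afterwards (a voter with $v_i=C$ is placed at the center of all candidates with $r$ taken large enough globally — but since $r$ is fixed, instead note that if some voter approves $C$ then $C_4=\emptyset$ and we can enlarge $r$ to span everything, re-checking the other voters still work; a voter with $v_i=\emptyset$ is placed at distance $>r$ from every candidate). The main obstacle is pinning down one radius $r$ that works for all three weak orders at once rather than a per-voter radius (which would only give DE, not DUE); this is exactly what the rigidity in Lemma~\ref{lem:WSC-char}, clause~(2), makes possible, since the ``middle'' order $v$ is forced to be the intersection or union of the two ``extreme'' orders, so its approval set is nested between those of $u$ and $w$ and can be cut out by the same threshold.
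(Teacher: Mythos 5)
Your approach is essentially the paper's: both invoke Lemma~\ref{lem:WSC-char} to reduce to at most three weak orders $u$, $v$, $w$ with $v\in\{u\cap w,\,u\cup w\}$, place the four blocks $u\setminus w$, $u\cap w$, $w\setminus u$, $\overline{u}\cap\overline{w}$ at separated points on the line (the last one far away), and use the nesting guaranteed by clause~(2) to make a single common radius suffice; your handling of trivial votes also matches. One numerical slip: with adjacent clusters spaced $t$ apart, a ball of radius $r=t+\varepsilon$ that contains the $C_1$ cluster must also contain $C_2$ or $C_3$ (you need $r<t$, not $r>t$), so in the case $v=u\cap w$ take e.g.\ $r=t/2$, the $v$-voter at the $C_1$ cluster, the $u$-voter midway between $C_2$ and $C_1$, and the $w$-voter midway between $C_1$ and $C_3$; the case $v=u\cup w$ is the one where a larger radius spanning all of $C_2,C_1,C_3$ is appropriate. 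With that fix the argument goes through exactly as in the paper.
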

\begin{proof}
Clearly empty votes can be ignored when checking whether a profile satisfies DUE,
so assume $\calP$ contains to empty votes.
Then it contains at most three distinct votes $u$, $v$, $w$ with 
$v=u\cap w$ or $v=u\cup w$. Set $\rho(c)=1$ for $c\in u\setminus w$,
$\rho(c)=2$ for $c\in u\cap w$, $\rho(c)=3$ for $c\in w\setminus u$,
$\rho(c)=10$ for $c\not\in u\cup w$.
We set $r=1$ if $v=u\cap w$ and $r=2$ if $v=u\cup w$, and position the voters
accordingly.
\end{proof}

The last arrow on this level is from PART to DUE: here, the containment is straightforward,
as the candidates approved by each voter can be placed as a block on the axis, with the respective
voter(s) placed in the center of this block.

\begin{proposition}\label{prop:DUE->CI/VI}
If a dichotomous profile $\calP$ satisfies
DUE then it satisfies both VI and CI.
The converse direction does not hold: there are profiles that satisfy VI and CI but not DUE.
\end{proposition}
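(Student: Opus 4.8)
The plan is to treat the two directions separately; the forward implication is routine and the real content is the counterexample. For ``DUE implies CI and VI'', I would take a DUE representation, i.e.\ a map $\rho$ of voters and candidates to $\mathbb{R}$ together with a radius $r$ such that $v_i=\{c:|\rho(i)-\rho(c)|\le r\}$, and simply read the two orderings off $\rho$. Ordering the candidates by non-decreasing $\rho$-value, each vote $v_i$ is the set of candidates whose image lies in the fixed real interval $[\rho(i)-r,\rho(i)+r]$, and such a set is automatically a contiguous block of the ordering; hence CI holds. Ordering the voters by non-decreasing $\rho$-value instead, for each candidate $c$ the set of voters approving $c$ is $\{i:\rho(i)\in[\rho(c)-r,\rho(c)+r]\}$, again a contiguous block; hence VI holds. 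Ties are harmless: candidates (resp.\ voters) with equal $\rho$-value are approved by exactly the same voters (resp.\ approve exactly the same candidates), so they may be listed consecutively. I do not expect any obstacle here.

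For the converse I would exhibit the profile $\calP=(\{b,c,d\},\{b,c\},\{c,d\},\{c\})$ over $C=\{b,c,d\}$. It satisfies CI with respect to the candidate order $b\lhd c\lhd d$, since each of the four votes is an interval of it. It satisfies VI with respect to the voter order $\{c\}\sqsubset\{b,c\}\sqsubset\{b,c,d\}\sqsubset\{c,d\}$: the voters approving $b$ are $\{b,c\},\{b,c,d\}$ (consecutive), the voters approving $c$ are all four, and the voters approving $d$ are $\{b,c,d\},\{c,d\}$ (consecutive). So it remains to show $\calP$ is not DUE.

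This is the heart of the argument, and I would prove it by contradiction. Suppose $(\rho,r)$ is a DUE representation of $\calP$. Since $b,c,d$ are pairwise separated by some vote, $\rho$ is injective on them; assume WLOG $\rho(b)<\rho(d)$ and consider three cases for $\rho(c)$. If $\rho(b)<\rho(c)<\rho(d)$: the vote $\{b,c,d\}$ puts all three images inside an interval of length $2r$, so $\rho(d)-\rho(b)\le 2r$, whereas the vote $\{c\}$ provides a point within distance $r$ of $\rho(c)$ but at distance $>r$ from both $\rho(b)$ and $\rho(d)$, which forces $\rho(d)-\rho(b)>2r$ --- contradiction. If $\rho(c)<\rho(b)<\rho(d)$: here $\rho(b)$ lies between $\rho(c)$ and $\rho(d)$, and $\rho(d)-\rho(c)\le 2r$ by the vote $\{b,c,d\}$, so any point within distance $r$ of both $\rho(c)$ and $\rho(d)$ is within distance $r$ of $\rho(b)$ as well; thus the vote $\{c,d\}$ would also approve $b$ --- contradiction. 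The case $\rho(b)<\rho(d)<\rho(c)$ is symmetric, with the vote $\{b,c\}$ playing the role of $\{c,d\}$. The point I would emphasise is that the two length-two votes $\{b,c\}$ and $\{c,d\}$ pin the candidate order on the line (up to reversal) with $c$ in the middle, which is precisely what stops the singleton vote $\{c\}$ from being realised near an endpoint; getting the two short inequality arguments right is the only delicate step.
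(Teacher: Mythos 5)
Your proof is correct and follows essentially the same strategy as the paper: read the VI and CI orders directly off the DUE embedding $\rho$, then refute DUE for an explicit VI-and-CI profile by playing a large vote (which squeezes all candidates into a window of width $2r$) against a small vote (which forces two candidates more than $2r$ apart, or forces an unwanted candidate into an approval interval). The only difference is the witness: the paper uses $(\{a,b,c\},\{b,c,d\},\{b\},\{c\})$ over four candidates, whereas you use $(\{b,c,d\},\{b,c\},\{c,d\},\{c\})$ over three; your example is smaller, and your case analysis on the position of $\rho(c)$ is self-contained (the paper instead first enumerates the admissible CI axes and then derives the contradiction for one of them by symmetry). Both counterexamples are valid, and your three cases together with the injectivity observation and the reflection WLOG do cover all placements, so there is no gap.
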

\begin{proof}
Since $\calP$ satisfies DUE, we have an embedding $\rho$ of votes and candidates into the real line.
For VI, we order voters as induced by the $\rho$ mapping; the voters approving some candidate form an interval on this induced order.
For CI, we order candidates as induced by the $\rho$ mapping; voters always approve a single interval on this ordering.

For showing that the converse direction does not hold, consider the profile $(\{a,b,c\}$, $\{b,c,d\}$, $\{b\}$, $\{c\})$.
Towards a contradiction assume that $\rho$ is a mapping of voters and candidates into the real line that witnesses the DUE property for a fixed radius $r$.
The given profile satisfies CI only with respect to the orders $a \lhd b \lhd c\lhd d$, $a \lhd c \lhd b\lhd d$ and their reverses.
Since the profile is symmetric with respect to $a$ and $d$ and with respect to $b$ and $c$, we can assume without loss of generality that $\rho$ orders candidates as the order $a \lhd b \lhd c\lhd d$ does.
Then it has to hold that $|\rho(a)-\rho(c)|\leq r$ since $a$ and $c$ appear in the same vote.
However, due to the vote $\{b\}$, it also has to hold that $|\rho(a)-\rho(c)|> r$; this is a contradiction.
\end{proof}

We see that similar to total orders, where the intersection of the single-peaked and 
the single-crossing domain is a strict subset of the 1-Euclidean domain (see discussion in \citep{DoignonF94,elk-fal-sko:c:spsc}), for dichotomous preferences also VI intersected with CI does not yield DUE.
The next results shows that the classes of CI, DE, PSP and PE preferences coincide.

\begin{proposition}\label{prop:PE-PSP}
Let $\calP$ be a dichotomous profile. Then the following conditions are equivalent:
(a) $\calP$ satisfies PE (b) $\calP$ satisfies PSP 
(c) $\calP$ satisfies CI (d) $\calP$ satisfies DE.
\end{proposition}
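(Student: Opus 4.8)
The plan is to prove the cycle of implications (a) $\Rightarrow$ (b) $\Rightarrow$ (c) $\Rightarrow$ (d) $\Rightarrow$ (a). The implication (a) $\Rightarrow$ (b) is immediate from the fact (recalled in the preliminaries) that a $1$-Euclidean profile of total orders is single-peaked: a $1$-Euclidean refinement of $\calP$ is in particular a single-peaked refinement, so PE implies PSP.

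For (b) $\Rightarrow$ (c), I would start from a single-peaked refinement $\calP'=(\succ_1,\dots,\succ_n)$ of $\calP$ with respect to an axis $\lhd$, and argue that the same $\lhd$ witnesses CI. The key observation is that, since $\succ_i$ refines $\succ_{v_i}$, every candidate of $v_i$ is ranked by $\succ_i$ above every candidate of $\overline{v_i}$, so $v_i$ is exactly the set of the $|v_i|$ top-ranked candidates of $\succ_i$; and in any single-peaked order the set $T$ of the $t$ top-ranked candidates is a $\lhd$-interval — indeed, if $a,c\in T$, $b\notin T$, and $a\lhd b\lhd c$, then $a\succ_i b$ and $c\succ_i b$, whereas single-peakedness applied to $a\lhd b\lhd c$ together with $a\succ_i b$ forces $b\succ_i c$, a contradiction. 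Hence each $v_i$ is a $\lhd$-interval, so $\calP$ satisfies CI.

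For (c) $\Rightarrow$ (d), I would take a candidate order $c_1\lhd\dots\lhd c_m$ witnessing CI (so each nonempty $v_i$ equals $\{c_{p_i},\dots,c_{q_i}\}$ for some $p_i\le q_i$), set $\rho(c_j)=j$, and place each voter with $v_i\neq\emptyset$ at the midpoint $\rho(i)=(p_i+q_i)/2$ with radius $r_i=(q_i-p_i)/2$ (voters with $v_i=\emptyset$ being sent to the left of all candidates with $r_i=0$); a one-line check confirms $v_i=\{c:|\rho(i)-\rho(c)|\le r_i\}$, so $\calP$ satisfies DE.

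The substantive step is (d) $\Rightarrow$ (a), where a DE embedding — whose distances may have ties and so encode only the weak orders $\succ_{v_i}$ — must be turned into a genuine $1$-Euclidean profile of total orders refining $\calP$. My plan is: given a DE embedding $\rho_0$ with radii $(r_i)$, first sort the candidates by non-decreasing $\rho_0$-value (ties broken arbitrarily) into $c_1\lhd\dots\lhd c_m$; since candidates sharing a $\rho_0$-value lie in exactly the same votes, each nonempty $v_i$ is an interval $\{c_{p_i},\dots,c_{q_i}\}$ of $\lhd$ (this also reproves (d) $\Rightarrow$ (c)). Then I would re-embed by putting $\rho(c_j)=j$, fixing a single irrational $\epsilon$ with $0<\epsilon<1/2$, and placing $\rho(i)=(p_i+q_i)/2+\epsilon$ (voters with $v_i=\emptyset$ going to any irrational point, harmlessly). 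Since $\rho(i)$ is irrational and the $\rho(c_j)$ are integers, no two candidates are equidistant from voter $i$, so $c\succ_i d\iff|\rho(i)-\rho(c)|<|\rho(i)-\rho(d)|$ is a total order; and because the farthest candidate of $v_i$ from $\rho(i)$ sits at distance $(q_i-p_i)/2+\epsilon$ while the nearest candidate of $\overline{v_i}$ sits at distance at least $(q_i-p_i)/2+1-\epsilon$, and $\epsilon<1/2$, the order $\succ_i$ ranks all of $v_i$ above all of $\overline{v_i}$ and hence refines $\succ_{v_i}$. This makes $(\succ_1,\dots,\succ_n)$ a $1$-Euclidean refinement of $\calP$, closing the cycle. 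The part I expect to need the most care is precisely this perturbation argument — choosing the voter positions generically enough to break every distance tie, yet with a perturbation small enough that the $|v_i|$ nearest candidates remain exactly the candidates of $v_i$; the remaining verifications, together with the degenerate cases $v_i\in\{\emptyset,C\}$ and $m\le 1$, should be routine.
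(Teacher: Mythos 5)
Your proof is correct and follows essentially the same route as the paper: the same cycle (a)$\Rightarrow$(b)$\Rightarrow$(c)$\Rightarrow$(d)$\Rightarrow$(a) with the same constructions at each step. The paper only sketches these steps (in particular it merely remarks that for (d)$\Rightarrow$(a) one ``may have to modify $\rho$ slightly to avoid ties''), and your explicit interval argument for (b)$\Rightarrow$(c) and irrational-offset perturbation for (d)$\Rightarrow$(a) correctly fill in the details it leaves implicit.
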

\begin{proof}[Proof sketch]
Suppose $\calP$ satisfies PE, and let $\calP'$ be a refinement of $\calP$ that, together
with a mapping $\rho$, witnesses this. Then $\calP'$ is single-peaked and therefore
$\calP$ satisfies PSP. If $\calP$ satisfies PSP, as witnessed by a refinement $\calP'$ and an axis
$\lhd$, then $\calP$ satisfies CI with respect to $\lhd$. If $\calP$ satisfies CI with respect
to an order $\lhd$ of candidates, we can map the candidates into the real axis
in the order suggested by $\lhd$ so that the distance between every 
two adjacent candidates is $1$. We can then choose an appropriate approval radius  
and position for each voter. Finally, if $\calP$ satisfies DE, as witnessed by a mapping $\rho$,
we can use this mapping to construct a refinement of $\calP$; by construction,
this refinement is 1-Euclidean (we may have to modify $\rho$ slightly to avoid ties). 
\end{proof}

Also, every PE profile is PSC since every 1-Euclidean refinement is also single-crossing. 
Interestingly, the converse is not true.
\begin{example}
Consider the profile $\calP = (\{a,b\},\{a,c\},\{b,c\})$ over $C=\{a,b,c\}$.
It satisfies PSC, as witnessed by the single-crossing refinement
$(a\succ b\succ c, c\succ a\succ b, c\succ b\succ a)$. However,
in every refinement of $\calP$ the first voter ranks $c$ last, the second voter ranks $b$ last,
and the third voter ranks $a$ last. Thus, no such refinement can be single-peaked,
and, consequently, no such refinement can be 1-Euclidean.
\end{example}

The equivalence between PSC and SSC is not entirely obvious: while it is clear
that a profile that violates SSC also violates PSC, to prove the converse
one needs to use an argument similar to the proof of Theorem~4 in \citep{EFLO15}.
This has been shown in the extended version of \citep{EFLO15}.

\begin{proposition}
If a dichotomous profile $\calP$ satisfies VI, it also satisfies SSC.  
\end{proposition}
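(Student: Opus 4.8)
The plan is to show that \emph{any} ordering of the voters witnessing VI automatically witnesses SSC. So I would fix an ordering $v_1 \sqsubset \dots \sqsubset v_n$ such that, for every candidate $c$, the set $A_c = \{i : c \in v_i\}$ of positions of voters approving $c$ forms an interval of $\sqsubset$. For a pair of candidates $a, b$, the two sets appearing in the definition of SSC are precisely $V_1 = \{v_i : a \in v_i,\, b \notin v_i\}$, whose set of positions is $A_a \setminus A_b$, and $V_2 = \{v_i : a \notin v_i,\, b \in v_i\}$, whose set of positions is $A_b \setminus A_a$. Hence the proposition reduces to the following order-theoretic fact: if $I$ and $J$ are intervals of a linear order, then either every element of $I \setminus J$ precedes every element of $J \setminus I$, or every element of $J \setminus I$ precedes every element of $I \setminus J$ (read vacuously when one difference set is empty).

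I would prove this fact by contradiction. If it fails, there are $a \in I \setminus J$ and $b \in J \setminus I$ with $b \sqsubset a$, and also $a' \in I \setminus J$ and $b' \in J \setminus I$ with $a' \sqsubset b'$. Since $I$ is an interval, $a \in I$, $b \notin I$, and $b \sqsubset a$, the element $b$ must lie strictly below $\min I$ --- otherwise $b$ would lie between $\min I \in I$ and $a \in I$ and hence belong to $I$. Symmetrically, since $J$ is an interval, $b' \in J$, $a' \notin J$, and $a' \sqsubset b'$, the element $a'$ lies strictly below $\min J$. But $a' \in I$ gives $\min I \sqsubseteq a'$, hence $b \sqsubset a'$; and $b \in J$ gives $\min J \sqsubseteq b$, hence $a' \sqsubset b$. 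This is a contradiction.

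Putting the two pieces together: applying the fact with $I = A_a$ and $J = A_b$ for each pair $\{a, b\}$ of candidates shows that $\sqsubset$ meets the SSC requirement for every pair, so $\calP$ satisfies SSC. I do not expect a genuine obstacle here; the order-theoretic lemma is the only non-routine ingredient and it is elementary. The points to watch are purely cosmetic: empty difference sets (handled by interpreting ``precedes'' vacuously) and repeated votes (harmless, since the whole argument is phrased in terms of the positions $1, \dots, n$ rather than the votes themselves).
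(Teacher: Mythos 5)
Your proof is correct and takes essentially the same route as the paper: both arguments show that an ordering witnessing VI already witnesses SSC, the key point being that a pair $a,b$ violating the SSC condition would force one of the approval sets $A_a$, $A_b$ to fail the interval property. Your explicit order-theoretic lemma about $I\setminus J$ and $J\setminus I$ is just a cleaner, direct packaging of the paper's brief contrapositive sketch, and the edge cases you flag (empty difference sets, repeated votes) are handled correctly.
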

\begin{proof}
Assume that an VI profile is not SSC. Since it is not SSC, for every ordering of votes $\sqsubset$ there are two
candidates $a\succ b$ and votes $v_i\sqsubset v_j\sqsubset v_k$ such that $v_i:a\succ b$, $v_j:b\succ a$ and
$v_k:a\succ b$. This implies, however, that for every $\sqsubset$ there is a candidate $a$ and votes $v_i\sqsubset
v_j\sqsubset v_k$ such that $v_i$ and $v_k$ approve of $a$ and $v_j$ disapproves $v_j$. This contradicts our
assumption that the given profile is VI. 
\end{proof}

We are now going to list the remaining counter-examples for containment and thus show that the arrows in Figure~\ref{fig:relations} indeed indicate strict containment.
\begin{itemize}
\item CI$\not\rightarrow$VI: Consider $(\{a,b,c\},\{a\},\{b\},\{c\})$. This profile is CI with respect to $a\lhd b\lhd c$. It is not VI since the vote $\{abc\}$ would have to be placed next to $\{a\},\{b\},\{c\}$.
\item VI$\not\rightarrow$CI: Consider $(\{a,b\},\{a,c\},\{a,d\})$. This profile is VI for the given order of voters. It is not CI since $a$ has to lie next to $b,c,d$.
\item VEI$\not\rightarrow$CEI: Consider $(\{a,b\},\{a,d\},\{c,d\})$. This profile is VEI for the given order of voters. It is not CEI since $a$ has to lie next to $b$ and $d$ and $c$ has to lie next to $d$. So $b \lhd a \lhd d\lhd c$ is the only order witnessing CI, but the vote $\{a,d\}$ is not an extremal interval on this order.
\item CEI$\not\rightarrow$VEI: Consider $(\{a,b\},\{a\},\{c\},\{b,c\})$. All votes are extremal intervals on the order $a\lhd b \lhd c$. The profile is however not VEI since $\{a,b\}$ has to lie next to $\{a\}$ and next to $\{bc\}$ and $\{c\}$ next to $\{bc\}$. So we obtain $\{c\} \sqsubset \{b,c\} \sqsubset \{a,b\} \sqsubset \{a\}$ as the only order witnessing VI, but this order does not satisfy VEI (consider candidate $b$).
\item PART$\not\rightarrow$VEI, CEI, WSC: Consider the PART profile $\{a\},\{b\},\{c\}$.
\end{itemize}
All other counterexamples involving WSC immediately follow from Lemma~\ref{lem:WSC-char} and Proposition~\ref{prop:VEI-char} and \ref{prop:CEI-char};
all missing counterexamples involving PART can be obtained by picking intersecting votes.


\subsection{Unique orders}

If voter's preferences are given by total orders, single-crossing profiles have a unique single-crossing order, i.e., only one specific order and its reverse witness the single-crossing property of the profile.
For single-peaked profiles (of total orders) this is not the case.
The question arises whether a similar phenomenon can be observed for dichotomous profiles.
Clearly, this question only makes sense for profiles with distinct votes (for VI, VEI, WSC) and when all candidates are approved by some vote (for CI and CEI).
Also, by unique we always mean that only one specific order and its reverse witness a certain restriction.

For dichotomous profiles satisfying SC, there is no unique order.
The profile $(\{a\},\{a,b\},\{b,c\})$ is SC and all votes that put $\{b,c\}$ at an outermost position witness the SC property.
Also profiles satisfying VI or CI do not have unique orders witnessing these properties; e.g., consider $\{\},\{a\},\{b\}$ and $\{a\},\{b\},\{c\}$, respectively.

For profiles being WSC, VEI or CEI we can show that their corresponding orders are indeed unique.
For profiles satisfying WSC, this follows from Lemma~\ref{lem:WSC-char}; for profiles satisfying either VEI or CEI the uniqueness can be shown as follows.

\begin{lemma}
For profiles containing distinct votes, VEI orders are unique.
\end{lemma}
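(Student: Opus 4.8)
The plan is to show that if a dichotomous profile $\calP$ with pairwise distinct votes satisfies VEI with respect to two orderings $\sqsubset$ and $\sqsubset'$, then $\sqsubset'$ is either $\sqsubset$ or its reverse. First I would recall what VEI gives us: in a witnessing ordering $v_1 \sqsubset \dots \sqsubset v_n$, for every candidate $c$ the set of voters approving $c$ is a prefix or a suffix, equivalently both the approvers and the disapprovers of $c$ form intervals touching an endpoint. I will work with the ordering $\sqsubset$ and argue that the relative order of any two consecutive votes $v_k, v_{k+1}$ is forced (up to global reversal) by the structure of the profile, so no other ordering can witness VEI except the reverse.

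The key step is a local comparison argument. Take two distinct votes $v$ and $v'$ in $\calP$ and look at the candidates in the symmetric difference $v \triangle v'$, which is nonempty since the votes are distinct. Partition these into $A = v \setminus v'$ and $B = v' \setminus v$. In any VEI ordering, for each candidate $a \in A$ the approvers of $a$ form a prefix or suffix that contains $v$ but not $v'$; for each $b \in B$ the approvers form a prefix or suffix containing $v'$ but not $v$. I would argue that this pins down, for each pair of votes, which one must lie "closer to the left end" — concretely, by tracking which extremal side each candidate's approval-interval is anchored to, one shows that the position of every vote relative to the endpoints is determined, and hence the full ordering is determined up to swapping the two ends, i.e. up to reversal. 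The cleanest route may be to induct on the votes in the order $\sqsubset$: show $v_1$ and $v_n$ are the unique possible endpoints (any vote in the interior would force some candidate's approval set to be a non-extremal interval), then peel off $v_1$ (or $v_n$) and apply the claim to the restricted profile, using that VEI of the whole profile restricts to VEI of any sub-list of votes.

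The main obstacle I anticipate is handling candidates on which the two endpoint votes $v_1$ and $v_n$ agree, and more generally candidates whose approval sets are "too large" or "too small" to distinguish positions — for instance a candidate approved by everyone, or by no one, gives no constraint at all. One must be careful that enough candidates remain to force the order; this is exactly why the hypothesis requires the votes to be pairwise distinct (so consecutive votes always differ in at least one candidate, which anchors a nontrivial extremal interval on one side). I would also need to rule out the degenerate case $n \le 2$ separately, where the statement is trivially true or vacuous. The argument should be short: the bulk is the observation that each candidate's extremal-interval constraint, combined with distinctness, leaves no freedom beyond reversal, and the formal write-up is a straightforward induction peeling votes off one endpoint.
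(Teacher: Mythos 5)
Your overall strategy is viable and genuinely different from the paper's. The paper argues via betweenness on triples: if $\sqsubset'$ is neither $\sqsubset$ nor its reverse, there are votes $v_i,v_j,v_k$ with $i<j<k$ such that $v_j$ is not between $v_i$ and $v_k$ under $\sqsubset'$; VEI forces the middle vote of any triple to contain the intersection and lie inside the union of the two outer votes, and imposing this for both orders forces $v_j$ to coincide with $v_i$ or $v_k$, contradicting distinctness. Your route instead rests on the parenthetical observation at the end of your proposal, which is really the heart of the matter: for each $t\in[1:n-1]$, since $v_t\neq v_{t+1}$ there is a candidate whose approver set is \emph{exactly} $\{v_1,\dots,v_t\}$ or \emph{exactly} $\{v_{t+1},\dots,v_n\}$ (a prefix containing $v_t$ but not $v_{t+1}$ must end at position $t$; a suffix containing $v_{t+1}$ but not $v_t$ must start at $t+1$). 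Any other VEI order must keep this fixed voter set as a prefix or suffix, hence must place all of $\{v_1,\dots,v_t\}$ before or after all of $\{v_{t+1},\dots,v_n\}$. Respecting all $n-1$ of these cuts already pins the order up to reversal (the cut at $t=1$ puts $v_1$ at an end; if $v_1$ is first, an easy induction on $t$ shows position $t$ must be $v_t$), so the peeling induction is not even needed. This cut argument is arguably more transparent than the paper's triple argument, which in turn has the advantage of being symmetric with the companion CEI lemma.

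Two soft spots in your write-up, though. First, your central claim that an interior vote cannot be an endpoint of $\sqsubset'$ is asserted, not proved; as stated, the justification ("would force some candidate's approval set to be a non-extremal interval") is a restatement of the goal. It does follow from the cuts at $t=1$ and $t=n-1$, but you need to say so. Second, the stitching step of your induction has a genuine gap: knowing only that $v_1$ and $v_n$ are the endpoints of $\sqsubset'$ and that $\sqsubset'$ restricted to $\{v_2,\dots,v_n\}$ is the induced order or its reverse does not finish the proof, because the order $v_1\sqsubset' v_n\sqsubset' v_{n-1}\sqsubset'\dots\sqsubset' v_2$ satisfies the second condition with $v_1$ first, and for $n\ge 3$ it must be excluded separately --- e.g., by the cut at $t=2$, or by observing that its endpoints are $\{v_1,v_2\}\neq\{v_1,v_n\}$. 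Both issues disappear if you write the whole argument in terms of cuts and drop the induction.
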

\begin{proof}
Without loss of generality assume that $1 \sqsubset\dots \sqsubset n$ be a VEI order.
Assume towards a contradiction that $\sqsubset'$ is another VEI order that is neither $\sqsubset$ nor its reverse.
Consequently, there exist three votes $v_i,v_j,v_k$, $i<j<k$ for which $\sqsubset$ and $\sqsubset'$ disagree on their order in the sense that $v_j$ is not in between $v_i$ and $v_k$ with respect to $\sqsubset'$.
Without loss of generality let us assume $j\sqsubset' i \sqsubset' k$.
Let us consider $C_X$ for every $X\subseteq\{i,j,k\}$ being defined as the set of all candidates approved by the votes corresponding to $X$  but not approved by those corresponding to $\{i,j,k\}\setminus X$.
For example, $C_{ik}$ are those candidates approved by $c_i$ and $c_k$ but not by $c_j$.
Since we have a CEI profile and $i\sqsubset j \sqsubset k$, we know that $C_{ik}=C_j=\emptyset$.
Under our assumption that $\sqsubset'$ is also a VEI ordering with $j\sqsubset' i \sqsubset' k$, we know that $C_{jk}=C_i=\emptyset$.
This implies that the candidate approved by $c_i$ are $C_{ijk}\cup C_{ij}$ and those approved by $c_j$ are also $C_{ijk}\cup C_{ij}$.
This contradicts our assumption that all votes are distinct.
\end{proof}

\begin{lemma}
If all candidates are approved by distinct sets of voters, CEI orders are unique.
\end{lemma}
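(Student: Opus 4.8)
The plan is to mirror the argument for the VEI uniqueness lemma, but with the roles of voters and candidates interchanged, using the hypothesis that distinct candidates are approved by distinct sets of voters as the analogue of the ``distinct votes'' hypothesis. So suppose $c_1\lhd\dots\lhd c_m$ is a CEI order and, towards a contradiction, that $\lhd'$ is another CEI order that is neither $\lhd$ nor its reverse. Then there are three candidates $c_i,c_j,c_k$ with $i<j<k$ (in the $\lhd$ order) such that $\lhd'$ places them in a different cyclic arrangement; after possibly reversing $\lhd'$ we may assume $c_j\lhd' c_i\lhd' c_k$.

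Next I would analyze what the two CEI orders force about the votes. For a vote $v$, the set $v$ is an extremal interval of $\lhd$, i.e., a prefix or a suffix of $c_1\lhd\dots\lhd c_m$. Consider any vote $v$. Since $c_i\lhd c_j\lhd c_k$ and $v$ is a prefix or suffix, if $v$ contains both $c_i$ and $c_k$ then $v$ contains $c_j$; equivalently, it is impossible to have $c_i,c_k\in v$ but $c_j\notin v$, and it is impossible to have $c_j\in v$ but $c_i,c_k\notin v$ (the latter because a prefix or suffix containing $c_j$ but neither of its $\lhd$-neighbours-on-the-outside is impossible: a prefix containing $c_j$ also contains $c_i$, a suffix containing $c_j$ also contains $c_k$). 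Now apply the same reasoning to $\lhd'$ with the ordering $c_j\lhd' c_i\lhd' c_k$: no vote has $c_j,c_k\in v$ but $c_i\notin v$, and no vote has $c_i\in v$ but $c_j,c_k\notin v$. Combining: every vote either contains all of $\{c_i,c_j\}$ or contains none of $\{c_i,c_j\}$ — because any vote separating $c_i$ from $c_j$ would, together with whatever it does with $c_k$, fall into one of the four forbidden patterns (check the four cases $c_i\in v, c_j\notin v$ with $c_k\in v$ or $c_k\notin v$, and symmetrically $c_j\in v, c_i\notin v$). Hence $c_i$ and $c_j$ are approved by exactly the same set of voters, contradicting the hypothesis that distinct candidates have distinct approval sets.

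The main thing to be careful about — the step I expect to be the crux — is the bookkeeping of the four (or eight) sign patterns showing that a vote separating $c_i$ and $c_j$ is impossible under the combined constraints of $\lhd$ and $\lhd'$; this is exactly parallel to the $C_X=\emptyset$ computation in the VEI proof, and the cleanest way to present it is via the sets $C_X$ for $X\subseteq\{i,j,k\}$ defined as the candidates-analogue: here one instead uses $V_X$, the set of voters approving exactly those among $\{c_i,c_j,c_k\}$ indexed by $X$. The two CEI orders give $V_{ik}=V_j=\emptyset$ (from $\lhd$) and $V_{jk}=V_i=\emptyset$ (from $\lhd'$), and hence the voters approving $c_i$ are precisely $V_{ijk}\cup V_{ij}$, which equals the set of voters approving $c_j$; this is the desired contradiction. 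I would also remark that, just as in the VEI case, one should first dispose of the trivial situation where $\lhd$ and $\lhd'$ agree up to reversal, and note that the hypothesis ``all candidates are approved by some vote'' (implicit in the surrounding discussion of when the uniqueness question is meaningful) is what rules out degenerate candidates that could be inserted anywhere.
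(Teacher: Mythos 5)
Your proof is correct and follows essentially the same route as the paper: reduce to a triple $c_i,c_j,c_k$ on which the two orders disagree, use the sets $V_X$ for $X\subseteq\{i,j,k\}$ to derive $V_{ik}=V_j=\emptyset$ from $\lhd$ and $V_{jk}=V_i=\emptyset$ from $\lhd'$, and conclude that $c_i$ and $c_j$ are approved by the same voters, contradicting the hypothesis. Your added case-by-case verification of the forbidden sign patterns just makes explicit what the paper leaves implicit.
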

\begin{proof}
First, let us observe that two candidates that are approved by the same voters certainly are indistinguishable; their positions on the CEI axis are interchangeable.
Thus, our condition is necessary for the lemma to hold.
The proof of this statement is similar to the previous proof.
Without loss of generality assume that $c_1 \lhd\dots \lhd c_m$ be a CEI order.
Assume towards a contradiction that $\lhd'$ is another CEI order that is neither $\lhd$ nor its reverse.
Consequently, there exist three votes $v_i,v_j,v_k$, $i<j<k$ for which $\lhd$ and $\lhd'$ disagree on their order in the sense that $c_j$ is not in between $c_i$ and $c_k$ with respect to $\lhd'$.
Without loss of generality let us assume $c_j\lhd' c_i \lhd' c_k$.
Let us consider $V_X$ for every $X\subseteq\{i,j,k\}$ being defined as the set of all votes that approve the candidates in $X$ and disapprove those in $\{i,j,k\}\setminus X$.
Since we have a VEI profile and $c_i\lhd c_j \lhd c_k$, we know that $V_{ik}=V_j=\emptyset$.
Under our assumption that $\lhd'$ is also a CEI ordering with $c_j\lhd' c_i \lhd' c_k$, we know that $V_{jk}=V_i=\emptyset$.
This implies that the votes that approve $c_i$ are $V_{ijk}\cup V_{ij}$ and the votes approving $c_j$ are $V_{ijk}\cup V_{ij}$.
This contradicts our assumption that all candidates are approved by a distinct set of voters.
\end{proof}


\subsection{Detection}
To exploit the constraints defined in Section~\ref{sec:constraints},
we have developed algorithms that can decide whether a given profile
belongs to one of the restricted domains defined by these constraints. 
Our results are summarized in Table~\ref{tab:compl}. 

\begin{table}
\begin{center}
\begin{tabular}{c|c}
constraint & complexity\\
\hline
2PART & poly (trivial) \\
PART & poly (trivial) \\
VEI & poly ({\sc consecutive 1s}) \\
CEI & poly ({\sc consecutive 1s}) \\
WSC & poly \citep{EFLO15} \\
DUE & poly \citep{woeginger-nederlof:personal-commun}  \\
VI & poly ({\sc consecutive 1s}) \\
CI=DE=PSP=PE & poly ({\sc consecutive 1s}) \\
PSC=SSC & open \\
\end{tabular}
\caption{The complexity of detecting structure in dichotomous profiles}
\label{tab:compl}
\end{center}
\end{table}

Clearly, verifying whether a given profile satisfies 2PART or PART is straightforward.
For most of the remaining problems, we can proceed by a reduction to the classic {\sc Consecutive 1s}
problem \citep{boo-lue:j:cons1s}. This problem asks if the columns of a given $0$-$1$ 
matrix can be permuted in such a way that in each row of the
resulting matrix the $1$s are consecutive, i.e., the $1$s form an interval in each row;
it admits a linear-time algorithm \citep{boo-lue:j:cons1s}.

\begin{theorem}
Detecting whether a dichotomous profile satisfies CEI, CI, VI or VEI is possible in $\calO(m\cdot n)$ time.
\end{theorem}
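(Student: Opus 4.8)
The plan is to reduce each of the four detection problems to {\sc Consecutive 1s}, which is solvable in linear time in the size of the input matrix~\citep{boo-lue:j:cons1s}. The key observation is that CI and VI are, almost by definition, instances of {\sc Consecutive 1s} — we just have to be careful about which of the candidate/voter incidence matrix to feed in — while CEI and VEI are the ``extremal'' variants, which reduce to {\sc Consecutive 1s} after a standard padding trick.

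First I would set up the incidence matrix. Given $\calP=(v_1,\dots,v_n)$ over $C=\{c_1,\dots,c_m\}$, let $M$ be the $n\times m$ $0$-$1$ matrix with $M_{ic}=1$ iff $c\in v_i$, and let $M^{\mathsf T}$ be its transpose. For CI: by definition $\calP$ satisfies CI iff the candidates can be ordered so that each $v_i$ is an interval, i.e.\ iff the columns of $M$ can be permuted to make the $1$s consecutive in every row — this is exactly {\sc Consecutive 1s} on $M$. For VI: $\calP$ satisfies VI iff the voters can be reordered so that for every candidate $c$ the set of voters approving $c$ is an interval, i.e.\ iff the columns of $M^{\mathsf T}$ can be permuted to make the $1$s consecutive in every row — {\sc Consecutive 1s} on $M^{\mathsf T}$. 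Both matrices have $mn$ entries, and the Booth--Lueker PQ-tree algorithm runs in time linear in the number of entries (plus rows and columns), giving $\calO(m\cdot n)$.

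Next I would handle the ``extremal'' variants by the standard augmentation trick: a set is a prefix or a suffix of a linear order iff it is an interval in that order once we adjoin a new sentinel element at one fixed end. Concretely, for CEI, add a dummy candidate $c_0$ to $C$ and put $c_0\in v_i$ for every $i$ — call the resulting matrix $M'$ (size $n\times(m+1)$). I claim $\calP$ satisfies CEI iff $M'$ has the consecutive-ones property. Indeed, if each $v_i$ is a prefix or suffix of some order $\lhd$ on $C$, place $c_0$ to avoid doubt — wait, that does not work directly, so instead: note that $v_i$ is a prefix or suffix of $\lhd$ iff $v_i\cup\{c_0\}$ is an interval in the order obtained from $\lhd$ by inserting $c_0$ in the middle. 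More cleanly, a family of subsets of an $m$-element ground set can be simultaneously realized as prefixes/suffixes of a common order iff the family $\{\,v_i\cup\{c_0\} : i\,\}$ over the $(m{+}1)$-element set can be simultaneously realized as intervals of a common order (given such an interval order, $c_0$ splits it into two halves, and each $v_i$, being an interval containing $c_0$, is a suffix of the left half followed by a prefix of the right half — but since it must lie on one side or the other of where we ``cut'' at $c_0$, one checks it is exactly a prefix or a suffix of the restriction to $C$; conversely prefixes/suffixes of $\lhd$ become intervals through $c_0$ in the doubled order). So CEI on $M$ is {\sc Consecutive 1s} on $M'$. Symmetrically, VEI reduces to {\sc Consecutive 1s} on the transpose of the voter-augmented matrix: add a dummy voter $v_0$ with $v_0=C$ and test consecutive ones on $(M'')^{\mathsf T}$. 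In all four cases the matrix has $\calO(mn)$ entries, so the Booth--Lueker algorithm finishes in $\calO(m\cdot n)$ time.

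The main obstacle is getting the prefix/suffix-to-interval reduction exactly right — in particular, verifying carefully that inserting a single universally-approved sentinel element faithfully encodes ``prefix or suffix'' and does not accidentally allow interval patterns that correspond to neither a prefix nor a suffix, and conversely does not forbid any legitimate CEI/VEI order. This is a routine but slightly fiddly case analysis on where the sentinel sits relative to an interval; everything else is immediate from the definitions and from the linear-time bound for {\sc Consecutive 1s}.
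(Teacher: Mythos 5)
Your reductions for CI and VI are exactly the paper's: run {\sc Consecutive 1s} on the voter--candidate incidence matrix and on its transpose, respectively. The problem is the sentinel trick for CEI and VEI: it is not a correct reduction, and the step you flag as ``routine but slightly fiddly'' is in fact where it breaks. An interval of the augmented order that contains the sentinel $c_0$ need not lie on one side of $c_0$ --- it can straddle it --- and then its restriction to $C$ is an interval that is neither a prefix nor a suffix. Concretely, take the profile $(\{a,b\},\{b,c\})$ over $C=\{a,b,c,d\}$ (which the paper itself uses as a non-CEI example). Adding a universally approved $c_0$ yields the sets $\{a,b,c_0\}$ and $\{b,c,c_0\}$, and the order $a\lhd b\lhd c_0\lhd c\lhd d$ makes both of them intervals, so your algorithm would report CEI even though no order of $\{a,b,c,d\}$ makes both $\{a,b\}$ and $\{b,c\}$ extremal. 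The assertion in your parenthetical that the interval ``must lie on one side or the other of where we cut at $c_0$'' is exactly the unjustified (and false) claim.

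The correct encoding, which the paper uses, exploits the equivalence already stated in the definition of CEI: a set is a prefix or a suffix of an order if and only if both the set and its complement are intervals of that order. So for CEI you should feed {\sc Consecutive 1s} the $2n\times m$ matrix containing, for each vote $v_i$, both its characteristic row and the characteristic row of $\overline{v_i}$; for VEI, the transpose construction. This doubles the number of rows but keeps the matrix size at $\calO(m\cdot n)$, so the claimed running time is unaffected. With that substitution (and your CI/VI parts kept as is) the proof goes through.
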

\begin{proof}
Let $C=\{c_1,c_2,\ldots,c_m\}$ and $\mathcal{P}=(v_1,v_2,\dots,v_n)$.
We construct an instance of \textsc{Consecutive 1s} in slightly different ways, depending on the property we
want to detect. In all cases, we obtain a ``yes''-instance if and only if
the given profile has the desired property. 

Let us start with CI. For each vote, we create one row of the matrix:
for each $i\in[n]$ and $j\in[m]$, the $j$-th entry of the $i$-th row is $1$ if 
$c_j\in v_i$ and $0$ otherwise. In this way, we obtain an $m\times n$ matrix. 
Permuting the columns of this matrix so that $1$s form an interval in each row
is equivalent to permuting candidates so that the set of candidates approved by each voter forms an interval.
For CEI, we combine the matrix for CI with its complement, i.e., we add a second
row for each vote $v_i$, so that the $j$-the entry of that row is $0$ of $c_j\in v_i$ and $1$ otherwise.
A column permutation of the resulting $m\times 2n$ matrix such that $1$s form an interval in each row
corresponds to permuting candidates so that for each voter both the set of her approved candidates
and the set of her disapproved candidates form an interval; this is equivalent to the CEI property. 
For VI it suffices to transpose
the matrix constructed for CI, and for VEI this matrix has to be combined with its complement.  
\end{proof}

For WSC, \citet{EFLO15} provide an algorithm that works for any weak orders
(not just dichotomous ones). They leave the complexity of detecting PSC and SSC as an open problem,
and we have not been able to resolve it for dichotomous weak orders. 
The problem of recognizing DUE preferences has recently been shown to be solvable in polynomial time by \citet{woeginger-nederlof:personal-commun} via a connection to bipartite permutation graphs.


\section{Algorithms for Committee Selection}\label{sec:algorithms}
In this section, we consider two classic approval-based committee selection rules---Proportional
Approval Voting (\pav) and Maximin Approval Voting (\mav)---and argue that
we can design efficient algorithms
for these rules when voters' preferences belong to some of the domains in our list
(for some of the richer domains, we may need to place mild additional restrictions on voters' preferences).

We start by providing formal definitions of these rules.
\begin{definition}\label{def:pav}
Every non-increasing infinite sequence of non-negative reals 
$\vecw=(w_1$, $w_2,\dots)$ that satisfies $w_1=1$ defines a {\em committee selection rule $\vecw$-\pav}.
This rule takes a set of candidates $C$, a dichotomous profile $\calP=(v_1,\dots,v_n)$
and a target committee size $k\le |C|$ as its input. For every size-$k$ subset $W$ of $C$,
it computes its $\vecw$-\pav score as $\sum_{v_i\in \calP} u_\vecw(|W\cap v_i|)$, 
where $u_\vecw(p)=\sum_{j=1}^p w_j$, and outputs a size-$k$ subset with the highest
$\vecw$-\pav score, breaking ties arbitrarily.
The $\vecw$-\pav rule with $\vecw=(1,\frac12,\frac13,\dots)$ is usually referred
to simply as the \pav rule, and we write $u(p)=1+\dots+\frac{1}{p}$.  
\end{definition}
\noindent PAV is of particular interest since it is the only known approval-based committee selection rule that satisfies the \emph{Extendend Justified Representation} property \citep{aaai/AzizBCEFW15}, which intuitively states that every large enough homogenous group has to be represented in the committee.
In what follows we assume that the entries
of $\vecw$ are rational and $w_i$ can be computed in time $\poly(i)$.

\begin{definition}
Given a set of candidates $C$, a dichotomous profile $\calP=(v_1,\dots,v_n)$
and a target committee size $k\le |C|$, the \mav-score of a size-$k$ subset $W$ of $C$
is computed as $\max_{v_i\in \calP} (|W\setminus v_i|+|v_i\setminus W|)$.
\mav outputs a size-$k$ subset with the lowest
\mav score, breaking ties arbitrarily.
\end{definition}

The $\vecw$-\pav rule is defined by \citet{MaKi12a},
see also \citep{Kilg10a}.
Intuitively, under this rule each voter is assumed to derive a utility of $1$ from having
exactly one of his approved candidates in the winning set; his marginal utility from having
more of his approved candidates in the winning set is non-increasing.
The goal of the rule is to maximize the sum of players' utilities. In contrast, \mav
\citep{BKS07a}
has an egalitarian objective: for each candidate committee, it computes the dissatisfaction
of the least happy voter, and outputs a committee that minimizes the quantity.

Computing the winning committee 
under \mav and \pav is NP-hard, see, respectively,
\citep{LMM07a} and \citep{SFL15,AGG+14a}. The hardness result for \pav extends to 
$\vecw$-\pav as long as $\vecw$ satisfies $w_1>w_2$; moreover, it holds even if
each voter approves of at most two candidates or if each candidate
is approved by at most three voters.

We will now show that \pav 
admits an algorithm whose running time is polynomial
in the number of voters and the number of candidates 
if the input profile satisfies CI or VI and,
furthermore, each voter approves at most $s$
candidates or each candidate is approved by at most $d$ voters,
where $s$ and $d$ are given constants. More specifically,
we prove that \pav winner determination for CI and VI preferences
is in FPT with respect to parameter $s$ and in XP with respect to parameter $d$.
For simplicity, we state our results for \pav; however, all of them
can be extended to $\vecw$-\pav. 

In what follows, we write $[x:y]$ to denote the set $\{z\in {\mathbb Z}: x\le z\le y\}$.
 
\begin{theorem}\label{thm:vi-pav-s}
Given a dichotomous profile $\calP=(v_1,\dots,v_n)$ over a candidate set
$C=\{c_1,\dots,c_m\}$ and a target committee size $k$,
if $|v_i|\le s$ for all $v_i\in \calP$ and $\calP$ satisfies VI,
then we can find a winning committee
under \pav in time $\calO(2^{2s}\cdot k\cdot n)$.
\end{theorem}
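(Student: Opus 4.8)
The key structural fact to exploit is that in a VI profile with the voters already ordered so that the approvers of each candidate form an interval, we may order the voters as $v_1 \sqsubset \dots \sqsubset v_n$ and think of each candidate $c$ as being specified by the interval $I(c) = [\ell(c) : r(c)]$ of voters who approve it. Since the PAV score is additive over voters, $\sum_i u(|W \cap v_i|)$, I would process the voters from left to right and build the committee incrementally by dynamic programming, where the state records enough information to compute the marginal contribution of each new voter. The crucial observation is that the set of candidates that are ``active'' at voter $v_i$ — meaning $c$ has already been tentatively selected and $v_i$ approves $c$ — is governed by intervals, so when we move from $v_i$ to $v_{i+1}$, candidates whose interval ends at $i$ drop out and candidates whose interval starts at $i+1$ may be added; and because each $|v_i| \le s$, at any voter at most $s$ of the approved candidates can possibly be in $W$, so the ``active'' part of $W$ relevant to $v_i$ has size at most $s$.

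Concretely, I would first (via \textsc{Consecutive 1s}, invoking the detection theorem above) obtain the voter ordering witnessing VI, and relabel so that $I(c)$ is an interval for every $c$. The DP processes voters in order $i = 1, \dots, n$. When we arrive at voter $v_i$, the candidates in $v_i$ are $v_i = \{c : i \in I(c)\}$, a set of size $\le s$; partition $v_i$ into those that "started earlier" (were available as choices before) and those that start exactly at $i$. For the DP state I would carry: the number of committee members chosen so far (to enforce $|W| = k$), and the subset $S \subseteq v_i$ of candidates in $v_i$ that have been placed in $W$ — but since $v_i$ and $v_{i+1}$ overlap only in the candidates whose intervals span both, it suffices to remember, for the candidates in $v_i \cap v_{i+1}$, whether each is in $W$. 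The transition from $i$ to $i+1$ decides, for each candidate first appearing at $i+1$ (i.e. with $\ell(c) = i+1$), whether to include it in $W$; this is a choice over a set of size $\le s$, hence $\le 2^s$ options, and each option contributes $u(|W \cap v_{i+1}|)$ to the objective. The state space is indexed by a subset of $v_i$ of size $\le s$ (hence $\le 2^s$ values) together with the count in $[0:k]$, giving $O(2^s \cdot k)$ states per voter and $O(2^s)$ transitions each, for a total of $O(2^{2s} \cdot k \cdot n)$; at the end we read off the maximum-score state with exactly $k$ members. A small subtlety is candidates approved by \emph{no} voter (empty interval): these never contribute to anyone's PAV score, so including them is never beneficial unless forced by $k > $ (number of candidates with nonempty interval), in which case we pad arbitrarily — this is handled as a trivial preprocessing step.

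The main obstacle — and the point requiring the most care — is arguing that remembering only $W \cap v_i$ (not the entire partial committee) is a \emph{sufficient} statistic, i.e. that the DP does not lose any feasible committee and does not overcount. This hinges precisely on the interval structure: a candidate's membership in $W$ affects the objective only through the voters in $I(c)$, which is a contiguous block, so once we have passed the right endpoint of every candidate currently "remembered", that candidate is irrelevant to all future objective terms and can be safely forgotten; and no candidate outside $v_i \cup v_{i+1} \cup \dots$ that we have already decided about can re-enter the picture. I would make this precise by defining the partial objective up to voter $i$ and showing that two partial committees agreeing on $W \cap v_i$ and on $|W|$ have identical sets of completions with identical incremental objectives, so keeping only the best representative of each state is correct. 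The analogous statement for CI (mentioned implicitly by the paper's grouping of CI and VI) would be obtained by the dual construction — ordering candidates so each $v_i$ is an interval and doing DP over candidates — but for this theorem only the VI case is claimed, and the argument above suffices.
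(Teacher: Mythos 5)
Your proposal is correct and follows essentially the same route as the paper's proof: a left-to-right dynamic program over the VI voter ordering with state $(i, W\cap v_i, |W|)$, base case at $v_1$, transitions enumerating the $\le 2^s$ ways the remembered subset changes between consecutive voters, and the same $\calO(2^{2s}\cdot k\cdot n)$ count of states times transitions. The only (harmless) additions are your explicit treatment of never-approved candidates when $k$ exceeds the number of approved candidates and the explicit sufficiency-of-statistic argument, both of which the paper leaves implicit.
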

\begin{proof}
Assume that $\calP$ satisfies VI with respect to the order of voters $v_1\sqsubset\dots\sqsubset v_n$. 
For each triple
$(i,A,\ell)$, where $i\in[1:n]$, $A\subseteq v_i$, and $\ell\in [0:k]$,
let $r(i,A,\ell)$ be the maximum utility that the first $i$ voters can obtain from a committee $W$  
such that $W\cap v_i=A$, $|W|=\ell$, and $W\subseteq v_1\cup\ldots\cup v_{i}$.  

We have $r(1,A,|A|)=u(|A|)$ for every $A\subseteq v_1$ 
and $r(1,A,\ell)=-\infty$ for every $A\subseteq v_1$, $\ell\in[0:k]\setminus\{|A|\}$.
To compute $r(i+1,A,\ell)$ for $i\in[1:n-1]$, $A\subseteq v_{i+1}$ and $\ell\in[0:k]$,
we let $p=|A\setminus v_{i}|$ and set
$$
r(i+1,A,\ell) = \max_{D\subseteq v_i\setminus v_{i+1}} r(i, D\cup(A\cap v_i), \ell-p)+u(|A|). 
$$
Indeed, every committee $W$ with $|W|=\ell$, $W\cap v_{i+1}=A$, $W\subseteq v_1\cup\ldots\cup v_{i+1}$
contains exactly $\ell-p$ candidates from $v_1\cup\ldots\cup v_{i}$
and its intersection with $v_i$ is of the form $D\cup(A\cap v_i)$,
where candidates in $D$ are approved by $v_i$, but not $v_{i+1}$. 
We output $\max_{A\subseteq v_n}r(n,A,k)$.

This dynamic program has $n\cdot 2^s\cdot (k+1)$ states, 
and the value of each state is computed using $\calO(2^s)$ arithmetic operations.
Assuming that basic calculations take constant
time, we obtain a total runtime of $\calO(2^{2s}\cdot k\cdot n)$.
\end{proof}

A similar dynamic programming algorithm can be used if voters' preferences
satisfy CI. 

\begin{theorem}\label{thm:ci-pav-s}
Given a dichotomous profile $\calP=(v_1,\dots,v_n)$ over a candidate set
$C=\{c_1,\dots,c_m\}$ and a target committee size $k$,
if $|v_i|\le s$ for all $v_i\in \calP$ and $\calP$ satisfies CI,
then we can find a winning committee
under \pav in time $\calO(2^{s}\cdot n\cdot m)$.
\end{theorem}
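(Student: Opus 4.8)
The approach mirrors the dynamic program of Theorem~\ref{thm:vi-pav-s}, but exploits the CI structure to keep track of much less information. Assume $\calP$ satisfies CI with respect to the candidate order $c_1 \lhd \dots \lhd c_m$; relabel candidates so that this order is $c_1, \dots, c_m$. The key observation is that, since each $v_i$ is an interval of $\{c_1,\dots,c_m\}$, a partial committee built on the prefix $\{c_1,\dots,c_j\}$ interacts with the ``future'' only through the votes whose interval straddles position $j$, and for each such vote only through how many of its candidates have already been selected. I would process candidates left to right, maintaining a state that records, for the current boundary $j$, the number of already-chosen committee members inside each vote whose interval contains $c_j$ (equivalently, started at or before $j$ and not yet ended); together with the total committee size so far.

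**Key steps.** First, set up the state space: let $r(j, f, \ell)$ be the maximum total \pav-utility obtainable over all votes that have already ``closed'' (ended at position $\le j$), taken over committees $W \subseteq \{c_1,\dots,c_j\}$ with $|W| = \ell$, where $f$ is the function assigning to each still-open vote the number of its approved candidates selected so far among $c_1,\dots,c_j$. Since every open vote has size at most $s$ and no more than $s$ votes can be open simultaneously at a boundary that lies inside an interval of length $\le s$ --- more carefully, the number of distinct open votes is bounded because their intervals all contain $c_j$, but distinct votes with the same interval are indistinguishable for our purposes, so we may collapse votes by their (start, end) pair --- the profile of counts $f$ ranges over at most $(s+1)^{s}$ values; combined with $j \in [0:m]$ and $\ell \in [0:k]$ this is polynomial for fixed $s$. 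Actually the cleaner bookkeeping: process candidates one at a time, and when deciding whether $c_j \in W$, update $f$ on all open votes containing $c_j$; when a vote closes at $j$, add $u$ of its final count to the accumulated utility and drop it from $f$. The transition from $j$ to $j+1$ is: choose whether $c_{j+1}\in W$ (increment $\ell$ and increment the count for every vote whose interval contains $c_{j+1}$ if so), then for every vote ending exactly at $j+1$, fold $u(\text{count})$ into the objective and remove it. This is $\calO(1)$ bookkeeping per state (dominated by $\calO(s)$ to update the at-most-$s$ open votes, but the number of \emph{distinct} open vote-types is $\calO(s)$, giving the claimed $\calO(2^s)$ factor rather than $(s+1)^s$ if one is careful). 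Finally, the answer is $\max_f r(m, f, k) = r(m, \emptyset, k)$ since all votes are closed at the end.

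**Why the stated running time.** To hit $\calO(2^s \cdot n \cdot m)$ rather than something with $(s+1)^s$, the trick is that at any boundary $j$ the open votes are exactly those intervals containing $j$; we only need the \emph{multiset} of counts, and moreover we can organize the DP so that at each of the $m$ steps we branch on $c_j \in W$ or not, and the relevant state is a subset-like object of size $2^s$ (e.g., for each open vote, which of its candidates to the left of $j$ were picked --- but collapsed appropriately). I would argue the number of reachable states at boundary $j$ is $\calO(2^s \cdot k)$ and amortize over the $n$ votes and $m$ positions. The correctness argument is the standard optimal-substructure claim: a committee restricted to a prefix is feasible for the subproblem with the induced state, and conversely any prefix-solution with a given boundary state extends independently of how it was built, because a vote's \pav-contribution $u(|W\cap v_i|)$ depends only on the final count, which is determined once we know the count at the left boundary of $v_i$ plus the choices inside $v_i$.

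**Main obstacle.** The delicate point is getting the running time down to exactly $\calO(2^s \cdot n \cdot m)$: a naive encoding of the ``counts on open votes'' state is $(s+1)^{O(s)}$, not $2^s$. The right move is to note that a committee member $c_j$ affects an open vote $v_i$ only via membership, so the state can be encoded as, for each of the at most $s$ candidates of each open vote, a single bit (chosen or not) --- but distinct open votes share candidates, so really the state is determined by which candidates among the (at most $s$, since window width is bounded by the largest vote size... this needs the argument that the union of open vote-intervals at a point has bounded size, which is false in general). I expect the actual proof sidesteps this by a different aggregation: track, per vote, only $|W \cap v_i \cap \{c_1,\dots,c_j\}|$, and observe that when we advance one candidate we update at most the votes containing $c_j$ but fold away votes as they close, so the ``active'' part of the state is always tied to a single candidate's vote-memberships, of which there are $\le n$ but the count-vector has $\le 2^s$ relevant configurations because each vote has $\le s$ members. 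Pinning down this counting --- showing $2^s$ and not more suffices --- is where the real work lies; the DP recurrence and its correctness are routine given Theorem~\ref{thm:vi-pav-s}.
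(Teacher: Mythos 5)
There is a genuine gap, and it is exactly the one you flag at the end as ``where the real work lies.'' Your plan --- a left-to-right sweep over the candidate axis, maintaining for each still-open vote the number of its candidates already placed in the committee --- is the right family of dynamic programs, but you never establish a $2^s$ bound on the state space, and the two justifications you attempt are both incorrect. First, it is not true that ``no more than $s$ votes can be open simultaneously'': up to $n$ votes can straddle a given boundary, and even after collapsing votes by their $(\text{start},\text{end})$ pair there are $\Theta(s^2)$ distinct interval types containing a given position, so a per-vote count vector ranges over $(s+1)^{\Theta(s^2)}$ values, not $(s+1)^s$ and certainly not $2^s$. Second, you assert that the claim ``the union of open vote-intervals at a point has bounded size'' is ``false in general'' --- but under the hypotheses of the theorem it is true and is precisely the observation that makes the whole argument work: since $\calP$ is CI and $|v_i|\le s$, every vote is an interval of at most $s$ consecutive candidates, so any vote whose interval reaches position $j$ has its entire already-seen portion inside the window $\{c_{j-s+1},\dots,c_j\}$.

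This is how the paper's proof gets $2^s$: the DP state is not a count vector over open votes at all, but simply the set $A=W\cap\{c_{j-s+1},\dots,c_j\}$, a subset of $s$ consecutive candidates (together with $j$ and the committee size $\ell$ used so far). Because every vote containing a candidate near position $j$ lies wholly inside an $s$-wide window, $A$ determines $|W\cap v|$ for every vote that closes at or near $j$, so each voter's utility $u(|W\cap v|)$ can be charged at the step where the rightmost candidate of $v$ is processed, via precomputed quantities $t(A,c)=\sum_{v\ni c}u(|A\cap v|)$. This yields $2^s\cdot m\cdot(k+1)$ states with $\calO(1)$-time transitions. Without this reduction from ``counts on open votes'' to ``committee restricted to the last $s$ positions,'' your proposal does not reach the stated running time, and as written it does not constitute a proof; with it, the rest of your recurrence and the optimal-substructure argument go through essentially as you describe.
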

\begin{proof}
Assume that $\calP$ satisfies CI with respect to the order of candidates $c_1\lhd\dots\lhd c_m$.
For each triple
$(j,A,\ell)$, where $j\in[1:m]$, $A\subseteq \{c_{j-s+1},\dots,c_j\}$, and $\ell\in [0:k]$,
let $r(j,A,\ell)$ be the maximum utility that voters can obtain from a committee $W$
such that $W\subseteq\{c_1,\dots, c_j\}$, $W\cap \{c_{j-s+1},\dots,c_j\}=A$, and $|W|=\ell$.
Also, for each $j\in[1:m-s+1]$ and each $A\subseteq \{c_j,\dots, c_{j+s-1}\}$ 
let $t(A,c_{j+s-1})=\sum_{v\in\calP: c_{j+s-1}\in v} u(|A\cap v|)$.
Note that all the quantities $t(.,.)$ can be computed in time $\calO(2^s\cdot m\cdot n)$.

We have $r(1,\emptyset,0)=0$, $r(1,\{c_1\},1)=|\{v_i: c_1\in v_i\}|$, 
and $r(1,A,\ell)=-\infty$ if $(A,\ell)\neq(\emptyset,0),(\{c_1\},1)$.
The quantities $r(j+1,A,\ell)$ for $j\in[1:m-1]$ can now be computed as follows.
If $c_{j+1}\not\in A$, we set 
$$
r(j+1,A,\ell) = \max\left\{r(j,A,\ell), r(j,A\cup\{c_{j-s}\},\ell)\right\}.
$$
Now, suppose that $c_{j+1}\in A$. Let $A'=A\setminus\{c_{j+1}\}$. 
Then $r(j+1,A,\ell)=\max\{r_1,r_2\}$ where 
\begin{align*}
r_1 &=r(j,A'\cup \{c_{j-s}\},\ell-1)-t(A',c_{j+1})+t(A,c_{j+1}),\\
r_2 &=r(j,A',\ell-1)-t(A',c_{j+1})+t(A,c_{j+1}).
\end{align*}
We output $\max_{A\subseteq\{c_{m-s+1},\dots,c_m\}}r(m,A,k)$. 
Our dynamic program has at most $2^s\cdot m\cdot (k+1)$ states, and the utility of each state
can be computed in time $\calO(1)$. Combining this with the time used to compute
$t(.,.)$, we obtain the desired bound on the running time.
\end{proof}

Our next two theorems also considers CI and VI preferences, and
deal with the case where no candidate is approved by too many voters.
Just as the algorithms in the proofs of Theorems~\ref{thm:vi-pav-s} and~\ref{thm:ci-pav-s},
the algorithms for this case are based on dynamic programming.

\begin{theorem}\label{thm:ci-pav-d}
Given a dichotomous profile $\calP=(v_1,\dots,v_n)$ over a candidate set 
$C=\{c_1,\dots,c_m\}$ and a target committee size $k$,
if $|\{i\mid c\in v_i\}|\le d$ for all $c\in C$ and $\calP$ satisfies CI,
then we can find a winning committee 
under \pav in time $\poly(d,m,n,k^d)$.
\end{theorem}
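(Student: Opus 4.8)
The approach is to run essentially the same candidate-sweep dynamic program as in the proof of Theorem~\ref{thm:ci-pav-s}, but now keyed on the fact that each candidate is approved by at most $d$ voters rather than on the fact that each vote is small. Assume $\calP$ satisfies CI with respect to a candidate order $c_1\lhd\dots\lhd c_m$; such an order can be found in $\calO(mn)$ time by the detection theorem above. Because each $v_i$ is an interval of this order, the ``active'' voters we need to track as we sweep left to right are exactly those whose interval straddles the current boundary between chosen prefix $\{c_1,\dots,c_j\}$ and the remaining candidates; the key structural observation is that the contribution of a voter $v_i$ to the \pav score depends only on $|W\cap v_i|$, and once we have passed the right endpoint of $v_i$ this quantity is finalized, so the only voters whose partial count must be remembered are the at most (some bounded number of) voters still open at position $j$.

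\textbf{State and transition.} The state will be a triple $(j,\sigma,\ell)$ where $j\in[1:m]$ is the current candidate, $\ell\in[0:k]$ is the number of committee members chosen among $c_1,\dots,c_j$, and $\sigma$ records, for each voter $v_i$ whose interval is still open at $j$ (i.e.\ $c_j\in v_i$ but $v_i\not\subseteq\{c_1,\dots,c_j\}$), how many of $c_1,\dots,c_j$ in $v_i$ have been put into $W$. Since each such voter contributes its first approved candidate, etc., the count for each open voter lies in $[0:k]$, and — here is where the degree bound enters — at any single position $j$ the number of open voters is at most $d$ (only voters approving $c_j$ can be open, and there are at most $d$ of those), so $\sigma$ ranges over at most $(k+1)^d$ possibilities. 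The transition from $j$ to $j+1$ branches on whether $c_{j+1}\in W$; in either case we update $\ell$, increment the counts in $\sigma$ for the open voters approving $c_{j+1}$ if $c_{j+1}$ is chosen, drop from $\sigma$ any voter whose interval closes at $j+1$ (adding its now-final utility $u(\text{count})$ to the running objective), and add to $\sigma$ with count $0$ (or $1$ if $c_{j+1}$ is chosen) any voter whose interval opens at $j+1$. The answer is read off from the best value at $j=m$, $\ell=k$.

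\textbf{Complexity and the main obstacle.} There are $\calO(m\cdot(k+1)^d\cdot(k+1))$ states and each transition costs $\poly(d)$ time plus the work of recomputing the affected voters' bookkeeping, which is also $\poly(d,n)$; multiplying gives $\poly(d,m,n,k^d)$ as claimed. The one genuinely delicate point — the ``hard part'' — is making precise that the set of voters whose count needs to be carried in the state is exactly the open ones and that this set never exceeds $d$ at a fixed position: this relies on the CI interval structure, namely that for any position $j$ a voter is ``open'' only if $c_j$ is one of its approved candidates, so openness at $j$ is witnessed by membership of $c_j$, whence at most $d$ voters are open; one must also check that closed voters' contributions have been fully and correctly accounted for and that opening voters start with the right count. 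Once this is set up the correctness argument is the routine induction that $r(j,\sigma,\ell)$ equals the best achievable partial objective consistent with $(j,\sigma,\ell)$, exactly as in Theorem~\ref{thm:ci-pav-s}, and the theorem follows.
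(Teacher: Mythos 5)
Your proposal is correct and follows essentially the same route as the paper's proof: a left-to-right sweep along the CI candidate axis whose dynamic-programming state records the number of committee members chosen so far together with, for each of the at most $d$ voters whose approval interval is still ``active'' at the current position, how many chosen candidates that voter approves, giving $(k+1)^d$ count-configurations per position and hence the claimed $\poly(d,m,n,k^d)$ bound. The key observation you isolate --- that every active/open voter at position $j$ must approve $c_j$, so the degree bound caps the number of tracked voters at $d$ --- is exactly the observation the paper uses.
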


\begin{proof}
Assume without loss of generality that 
the candidate order $c_1\lhd\dots\lhd c_m$
witnesses that $\calP$ is CI.
For each voter $v_i\in \calP$, let $c_{b_i}$ and $c_{e_i}$ be, respectively, 
the first and the last candidate (with respect to $\lhd$) approved by $v_i$,
i.e., $v_i=\{c_j\mid b_i\le j\le e_i\}$.
For $j\in[1:m]$,
we say that a voter $v_i$ is {\em active} at $j$ if $b_i\le j \le e_i$;
we say that a voter $v_i$ is {\em finished} at $j$ if $e_i\le j$.
Let $B^j=\{v_i\mid b_i=j\}$, $E^j=\{v_i\mid e_i=j\}$.
Given a set $W\subseteq C$, we will refer to the quantity
$u(|W\cap v_i|)=1+1/2+\dots+1/|W\cap v_i|$ as the {\em utility of voter $i$ from set $W$}.
Throughout the proof, we make the standard assumption
that for any real-valued function $f$ we have
$\max\{f(x)\mid x\in X\}=-\infty$ when $X=\emptyset$.

Let $R(j)$ be the set of all vectors $\vecr\in[0:k]^n$ 
such that for all $\ell\in[1:n]$
it holds that $0\le r_\ell\le \min\{j-b_\ell+1, k\}$ and, 
moreover, $r_\ell=0$ whenever $v_\ell$ is not active at $c_j$.
Vectors in $R(j)$ can be used to describe the impact of a set of candidates $C$ 
in $\{c_1,\dots,c_j\}$ with $|C|\le k$ on voters who are active at $c_j$: for each $v_\ell\in \calP$,
$r_\ell$ indicates how many candidates in $C$ are approved by $v_\ell$.
As there are at most $d$ voters who are active at $j$, we have $|R(j)|\le (k+1)^d$.
For each $j\in[1:m]$, $i\in[0:\min\{j,k\}]$ and $\vecr\in R(j)$,
let $\calW(i,j,\vecr)$ be the collection of all subsets of $C$
with the following properties:
each $W\in\calW(i,j,\vecr)$ satisfies $|W|=i$, $W\subseteq \{c_1,\dots,c_j\}$,
and, moreover, for each $\ell\in[1:n]$ such that $v_\ell$ is active at $c_j$
it holds that $|v_\ell\cap W|=r_\ell$.
Intuitively, $\calW(i,j,\vecr)$ consists of all size-$i$ subsets of $\{c_1,\dots,c_j\}$
whose impact on voters who are active at $c_j$ is described by $\vecr$. 
Let $A(i,j,\vecr)$ be the maximum total utility
that voters who are finished at $j$ derive from a set in $\calW(i,j,\vecr)$;
note that $A(i,j,\vecr)=-\infty$ if $\calW(i,j,\vecr)=\emptyset$.
Clearly, it is easy to compute $A(i,1,\vecr)$ for $i\in\{0,1\}$ 
and all $\vecr\in R(1)$.

We will now explain how to compute $A(i,j,\vecr)$ given the values
of $A(i',j-1,\vecr')$ for all $i'\in[0:\min\{j-1,k\}]$ and all $\vecr\in R(j-1)$.

Suppose first that $B_j\neq\emptyset$.
By definition of $R(j)$ we have $r_x\in\{0,1\}$ for each $v_x\in B_j$.
Moreover, if we have $r_x\neq r_y$ for some $v_x,v_y\in B_j$, then 
$\calW(i,j,\vecr)=\emptyset$ and consequently $A(i,j,\vecr)=-\infty$: no subset of 
$\{c_1,\dots, c_j\}$ can intersect $v_x$, but not $v_y$ or vice versa.

Now, if $B_j\neq\emptyset$ and $r_x=1$ for all
$v_x\in B_j$, all sets in $\calW(i,j,\vecr)$ contain $c_j$,
and therefore
$$
A(i,j,\vecr)=\max_{\vecr'\in R'_1}A(i-1,j-1,\vecr')+\sum_{v_\ell\in E_j}u(r_\ell),
$$
where $R'_1$ is the set of all vectors $\vecr'\in R(j-1)$ with $r'_\ell=r'_\ell-1$
for all voters $v_\ell$ that are active at both $c_j$ and $c_{j-1}$.
Indeed, the second summand here is the total utility of voters in $E_j$;
for every such voter $v_\ell$ we know
that for any set of candidates $W\in \calW(i,j,\vecr)$ he approves exactly $r_\ell$
candidates in $W$. The first summand is the maximum total utility of voters
who are finished at $j-1$ that can be achieved by
picking a set $W'$ so that $W'\cup\{c_j\}\in \calW(i,j,\vecr)$;
every such set $W'$ is contained in $\calW(i-1,j-1,\vecr')$
for some vector $\vecr'$ in $R(j-1)$ that is consistent with $\vecr$,
i.e. satisfies $r'_\ell=r'_\ell-1$
for all voters $v_\ell$ that are active at both $c_j$ and $c_{j-1}$.

By a similar argument, if $B_j\neq\emptyset$ and $r_x=0$ for all
$v_x\in B_j$, no set in $\calW(i,j,\vecr)$ contains $c_j$,
and therefore
$$
A(i,j,\vecr)=\max_{\vecr'\in R'_0}A(i,j-1,\vecr')+\sum_{v_\ell\in E_j}u(r_\ell),
$$
where $R'_0$ is the set of all vectors $\vecr'\in R(j-1)$ with $r'_\ell=r'_\ell$
for all voters $v_\ell$ that are active at both $c_j$ and $c_{j-1}$.

Finally, suppose that $B_j=\emptyset$. Then we have to consider
both possibilities for $c_j$.
To this end, define
\begin{align*}
a_1 &=\max_{\vecr'\in R'_1}A(i-1,j-1,\vecr')+\sum_{v_\ell\in E_j}u(r_\ell),\\
a_0 &=\max_{\vecr'\in R'_0}A(i,j-1,\vecr')+\sum_{v_\ell\in E_j}u(r_\ell),
\end{align*}
where $R'_1$ and $R'_0$ are defined as above,
and set 
$$
A(i,j,\vecr)=\max\{a_1,a_0\},
$$
again, the argument for correctness is the same as above.

To complete the proof, it remains to observe that
the \pav-score of an optimal size-$k$ committee 
is given by $\max_{\vecr\in R(m)}A(k,m,\vecr)$.
Once this score is computed, the respective committee  
can be found using standard dynamic programming techniques.

To bound the running time, note that our dynamic program has $\calO(km(k+1)^d)$ variables, and
the argument above establishes that the value of $A(i,j,\vecr)$ can be computed
in time $\calO(d(k+1)^d)$ given the values of $A(i',j-1,\vecr')$
for all $i'\in[0:\min\{k, j-1\}]$, $\vecr\in R(j-1)$.
\end{proof}

\begin{theorem}\label{thm:vi-pav-d}
Given a dichotomous profile $\calP=(v_1,\dots,v_n)$ over a candidate set 
$C=\{c_1,\dots,c_m\}$ and a target committee size $k$,
if $|\{i\mid c\in v_i\}|\le d$ for all $c\in C$ and $\calP$ satisfies VI,
then we can find a winning committee 
under \pav in time $\poly(d,m,n,k^d)$.
\end{theorem}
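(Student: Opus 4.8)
The plan is to adapt the dynamic program of Theorem~\ref{thm:ci-pav-d}, but now sweeping over the \emph{voters} instead of over the candidates. Fix an order $v_1\sqsubset\dots\sqsubset v_n$ witnessing that $\calP$ is VI, and for each candidate $c$ let $b_c\le e_c$ be the indices of the first and the last voter in this order who approves $c$, so that $\{i:c\in v_i\}=\{b_c,\dots,e_c\}$ and $e_c-b_c+1\le d$. The direct ``dual'' of Theorem~\ref{thm:ci-pav-d} would record, for every candidate currently approved by $v_i$, how many already-chosen committee members are approved by that candidate; this fails because $|v_i|$ is not bounded, and overcoming this is the main obstacle. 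The way around it is twofold. First, candidates with the same pair $(b_c,e_c)$ contribute identically to $|W\cap v_j|$ for every $j$ and hence are fully interchangeable, so it suffices to record counts rather than identities. Second, for voters $v_j$ with $j>i$ the membership of an already-born candidate $c$ (one with $b_c\le i$) in $v_j$ is governed solely by whether $j\le e_c$, and since approval sets are intervals of length at most $d$, only the endpoint values $e_c\in\{i,\dots,i+d-1\}$ are still relevant at sweep position $i$. Hence the whole state needed at position $i$ is: how many committee members have been chosen so far, and, for each endpoint value $e\in\{i,\dots,i+d-1\}$, how many of them have $e_c=e$.

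Concretely, I would define, for $i\in[0:n]$, $t\in[0:k]$ and $\vecr=(r_i,\dots,r_{i+d-1})\in[0:k]^{d}$, the quantity $A(i,t,\vecr)$ as the maximum of $\sum_{j=1}^{i}u(|W\cap v_j|)$ over all $W\subseteq C$ with $|W|=t$, with $b_c\le i$ for every $c\in W$, and with exactly $r_e$ members of $W$ satisfying $e_c=e$ for each $e\in\{i,\dots,i+d-1\}$; set $A(i,t,\vecr)=-\infty$ if no such $W$ exists, and $A(0,0,\mathbf 0)=0$. Note that such $W$ satisfy $|W\cap v_i|=\sum_{e=i}^{i+d-1}r_e$. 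To pass from position $i$ to position $i+1$, one drops the expired coordinate $r_i$ (candidates with $e_c=i$ are approved by no later voter), introduces a fresh coordinate for the endpoint $i+d$ initialised to $0$, and then, independently for each endpoint value $e\in\{i+1,\dots,i+d\}$, chooses how many of the candidates $c$ with $b_c=i+1$ and $e_c=e$ (a precomputed group) to add to the committee — the number of choices is at most $k+1$ per endpoint — updating $t$ and the corresponding coordinate accordingly; finally one adds $u(|W\cap v_{i+1}|)$, which is just the sum of the new coordinates. The optimal \pav-score is then $\max_{\vecr}A(n,k,\vecr)$, and an optimal committee is recovered by standard back-tracking; candidates approved by nobody contribute $0$ to every voter and may simply be appended at the end to reach size $k$ when necessary.

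Correctness is proved by induction on $i$: the utility of each voter $v_j$ depends only on $|W\cap v_j|$, every candidate that $v_j$ approves has already been decided by step $j$ (since $b_c\le j$), and, by the two interchangeability observations above, the only feature of the already-chosen candidates that matters for the remaining voters is the vector of endpoint counts, which is exactly $\vecr$. For the running time, the table has $O(nk(k+1)^{d})$ entries; computing one entry ranges over $d$ endpoint choices, each among at most $k+1$ values, i.e.\ at most $(k+1)^{d}$ possibilities, with $O(\poly(d))$ arithmetic per possibility, while precomputing all $b_c,e_c$ and all group sizes costs $O(mn)$. This yields a bound of $O\!\left(mn+nk(k+1)^{2d}\poly(d)\right)=\poly(d,m,n,k^{d})$, as claimed; a dependence linear in $k^{d}$ can be obtained by splitting each transition into $O(d)$ micro-steps, but this refinement is not needed. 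Finally, replacing $u$ by $u_\vecw$ throughout gives the same bound for $\vecw$-\pav.
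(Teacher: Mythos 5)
Your proof is correct, and it follows the same high-level route as the paper's: a dynamic program that sweeps over the voters in the VI order and exploits the fact that each candidate's set of approvers is an interval of length at most $d$, so that candidates are interchangeable within groups determined by their approval intervals and only counts need be stored. The difference lies in the state decomposition, and it is worth noting. The paper's state records, for every pair $(\ell,r)$ with $\ell\le i\le r$, how many chosen candidates have first approver $v_\ell$ and last approver $v_r$ --- a $d\times d$ array of counts, giving $(k+1)^{d^2}$ states per position. You observe that once a candidate has been ``born'' (its first approver is $\le i$), its membership in any later vote $v_j$ depends only on whether $j\le e_c$, so the birth index can be forgotten and only the $d$ possible death endpoints $\{i,\dots,i+d-1\}$ need be counted. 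This collapses the state to a length-$d$ vector, i.e.\ $(k+1)^{d}$ states, and your transition (expire the coordinate $e=i$, open a fresh one at $e=i+d$, then decide group-by-group how many newborn candidates to add) is a clean realisation of this. The payoff is a bound of order $nk(k+1)^{2d}$, which is transparently $\poly(d,m,n,k^d)$; the paper's $(k+1)^{d^2}$ state space makes that claim less immediate, so your version is, if anything, tighter. Your explicit handling of candidates approved by nobody (for whom the interval endpoints are undefined) is also a small but genuine point of care that the paper glosses over.
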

\begin{proof}
Assume without loss of generality that
the voter order $v_1\sqsubset\dots\sqsubset v_n$
witnesses that $(C,V)$ is in VI.
For each candidate $c_j\in C$, let $v_{b_j}$ and $v_{e_j}$ be, respectively,
the first and the last voter (with respect to $\sqsubset$) who approve $c_j$,
i.e., $\{v_i\in V\mid c_j\in v_i\} =\{v_i\mid b_j\le i\le e_j\}$.
Let $C^{\ell,r}= \{c_j\mid b_j=\ell, e_j=r\}$,
$B^\ell=\{c_j\mid b_j\le \ell\}$.
Given a set $W\subseteq C$, we will refer to the quantity
$u(|W\cap v_i|)=1+1/2+\dots+1/|W\cap v_i|$ as the {\em utility of voter $i$ from set $W$}.
Throughout the proof, we make the standard assumption
that for any real-valued function $f$ we have
$\max\{f(x)\mid x\in X\}=-\infty$ when $X=\emptyset$.

Let $\calN(i)$ be the set of all $m$-by-$m$ matrices over $[0:k]$
that have the following property: for 
every matrix $N=(N_{\ell,r})_{\ell,r\in[1:m]}\in\calN(i)$,
we have $0\le N_{\ell,r}\le |C^{\ell,r}|$ if $\ell\le i\le r$
and $N_{r,\ell}=0$ if $i<\ell$ or $i>r$.
Matrices in $\calN(i)$ can be used to describe the impact of a set of candidates $W$ on voter $v_i$: 
for each $\ell, r\in[1:m]$,
$N_{\ell,r}$ indicates how many candidates in $W$ are approved by $v_i$.
Since $c_j\in v_i$ implies $i-d+1\le b_j\le i$, $i\le e_j\le i+d-1$,
we have $|\calN(i)|\le (k+1)^{d^2}$.

For each $j\in[0:k]$, $i\in[1:n]$ and $N\in \calN(i)$,
let $\calW(i,j,N)$ be the collection of all size-$j$ subsets of $B^i$
such that $v_i\cap C^{\ell,r}=N_{\ell,r}$ for all $\ell,r\in[1:m]$;
we set $\calW(i,j,N)=\emptyset$ if $j\not\in[0:k]$, $i\not\in[1:n]$ or $N\not\in \calN(i)$.
In words, $\calW(i,j,N)$ consists of all size-$j$ sets consisting of 
candidates that are approved by at least one voter in $v_1,\dots,v_i$ 
whose impact on $v_i$ is described by $N$.
Let $A(i,j,N)$ be the maximum total utility
that voters in $\{v_1,\dots,v_i\}$ derive from a set in $\calW(i,j,N)$;
note that $A(i,j,N)=-\infty$ if $\calW(i,j,N)=\emptyset$.
It is easy to compute $A(1,j,N)$ for all $j\in[0:k]$
and all $N\in \calN(1)$: we have $A(1,j,N)=u(j)$ if
$j=\sum_{r\in[1:m]}n_{1, r}$ and $A(1,j,N)=-\infty$ otherwise.
Also, for each $i\in[1:n]$ we have $A(i,0,N)=0$ if $N_{\ell,r}=0$ for all $\ell,r\in[1:m]$
and $A(i,0,N)=-\infty$ otherwise.

We will now explain how to compute $A(i,j,N)$ given the values
of $A(i-1,j',N)$ for all $j'\in[1,j]$ and all $N\in\calN(i)$.
Fix $i\in[2:n]$, $j\in[0:k]$, $N\in\calN(i)$.
Note first that for any set $W\in \calW(i,j,N)$ we have
$$
|v_i\cap W| = \sum_{r,\ell\in[1:m]}n_{r,\ell};
$$
also, if $\sum_{r,\ell\in[1:m]}n_{r,\ell}> j$, then $\calW(i,j, N)=\emptyset$. 

Further, for every set $W\in\calW(i,j,N)$ the set $W\setminus\{c_t\mid b_t=i\}$
belongs to $\calW(i-1,j', N')$ for $j'=j-|\{c_t\mid b_t=i\}|$ and
for some matrix $N'\in\calN'(i)$ with $n'_{\ell,r}=n_{\ell,r}$ for $\ell\neq i$ and $r\neq i-1$.
Let $j'=j-|\{c_t\mid b_t=i\}|$,
$\calN'=\{N'\in\calN(i-1)\mid n'_{\ell,r}=n_{\ell,r}\text{ for }\ell\neq i,r\neq i-1\}$.
Then we have
$$
A(i,j,N) = \max_{N'\in\calN'(i,N)} A(i-1,j', N') + u(\sum_{r,\ell\in[1:m]}n_{r,\ell})
$$ 
if $\sum_{r,\ell\in[1:m]}n_{r,\ell}\le j$ and $A(i,j, N)=-\infty$ otherwise.

To complete the proof, it remains to observe that
the \pav-score of an optimal size-$k$ committee
is given by $\max_{N\in \calN(n)}A(n,k,N)$.
Once this score is computed, the respective committee
can be found using standard dynamic programming techniques,
and the bound on running time follows immediately.
\end{proof}

The reader may wonder if constraints on $s$ and $d$ in Theorems~\ref{thm:vi-pav-s},~\ref{thm:ci-pav-s}
and~\ref{thm:ci-pav-d} are necessary. We conjecture that the answer is yes, i.e., winner determination under
\pav remains hard under CI and VI preferences.

\begin{conjecture}
\pav is NP-hard even for CI and VI preferences.
\end{conjecture}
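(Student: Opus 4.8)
The plan is to prove NP-hardness via a polynomial-time reduction from a covering-type problem, exploiting the strict concavity of the PAV utility $u(p)=1+\frac12+\dots+\frac1p$. The natural template is the standard ``coverage with diminishing returns'' encoding: from an instance of \textsc{Exact Cover by 3-Sets} with universe $U$, $|U|=3q$, and a family $\calS$ of triples, create one candidate per set in $\calS$, one voter per element of $U$, let each voter approve exactly the sets that contain its element, and set $k=q$. Since $\sum_{c\in W}|c|=3q$ is fixed once $|W|=q$ is fixed, and $u(n)\le n$ with equality iff $n\le 1$, the PAV-score of any size-$q$ committee is at most $3q$, with equality iff every voter is covered exactly once, i.e.\ iff $\calS$ admits an exact cover. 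Analogous encodings can be built from \textsc{Max $k$-Cover}, \textsc{Vertex Cover}, or \textsc{Dominating Set}; in each case the combinatorial content of the source is faithfully transported into the committee-selection objective.

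The substantive part of the plan is to arrange that the profile produced by such a reduction satisfies CI (respectively VI) — ideally a single profile that is simultaneously CI and VI, which would settle both cases at once. In the CI formulation the problem is equivalent to: place $k$ points on the line maximizing $\sum_{I}u(|W\cap I|)$ over a given family of intervals $I$; in the VI formulation it is: choose $k$ intervals from a given family maximizing $\sum_{t}u(\mathrm{depth}_t)$, where $\mathrm{depth}_t$ is the number of chosen intervals containing the point $t$. Both are clean geometric problems with no obvious polynomial dynamic program, precisely because many overlapping intervals create long-range dependencies among otherwise unrelated choices; the bounds on $s$ and $d$ in Theorems~\ref{thm:vi-pav-s}--\ref{thm:vi-pav-d} are exactly what severs those dependencies. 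Proving the conjecture amounts to showing that these dependencies are rich enough to encode an NP-hard problem.

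This is where I expect the main obstacle to lie, and it is the reason the statement is only a conjecture. The naive move — restricting the source problem to instances whose approval incidence matrix already has the consecutive-ones property, so the reduction above outputs a CI/VI profile for free — fails: on such instances the source problems become interval-structured and are themselves solvable in polynomial time (e.g.\ exact cover with the consecutive-ones property degenerates to interval partitioning), so the interval structure cannot be inherited and must be \emph{manufactured} by the reduction, presumably via padding candidates and ``filler'' voters that complete non-consecutive approval sets into genuine intervals without perturbing the optimum. Designing such a gadget — or, alternatively, identifying a genuinely NP-hard one-dimensional problem to reduce from — is the crux. It is also conceivable that no such construction exists and that CI-/VI-restricted PAV is in fact in $\p$, in which case the right move is a cleverer dynamic program that does not need the size/degree bounds. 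My recommended first step is thus to attempt a padding-based reduction from \textsc{Max $k$-Cover} (whose bipartite structure is the most flexible) and, if that stalls, to revisit the positive side and try to eliminate the parameters $s$ and $d$ from the algorithms of Section~\ref{sec:algorithms}.
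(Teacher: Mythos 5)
The statement you are addressing is stated in the paper only as a \emph{conjecture}: the authors explicitly write that they believe \pav remains hard under CI and VI preferences but offer no proof, and the problem is left open. Your submission, read honestly, is not a proof either --- it is a research plan. You correctly set up the standard concavity-based reduction from \textsc{Exact Cover by 3-Sets} (the accounting $\sum_i |W\cap v_i| = 3q$ together with $u(p)\le p$ with equality iff $p\le 1$ is right, and is essentially how the known NP-hardness of unrestricted \pav is obtained), and you correctly identify the crux: the profile produced by that reduction is not CI or VI, and forcing the consecutive-ones property on the source instance trivializes the source problem. That diagnosis is accurate and matches why the paper can only conjecture the statement. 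But diagnosis is not resolution.

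The genuine gap is therefore the entire substantive content of the claim: no gadget is constructed, no padding scheme is specified, no alternative NP-hard one-dimensional problem is identified, and you yourself concede that the opposite outcome (a polynomial algorithm without the bounds on $s$ and $d$) is conceivable. A correct submission for a conjecture is either an actual reduction or an explicit acknowledgment that the statement is unproven; as written, your text should not be presented as a proof attempt but as a discussion of why the problem is open. If you wish to make progress, the most concrete next step you name --- a padding-based reduction from \textsc{Max $k$-Cover} with filler voters/candidates that complete approval sets into intervals without shifting the optimum --- is the right target, but until such a gadget is exhibited and its correctness (in particular, that the padding does not create new optima) is verified, nothing has been established.
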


However, for ``truncated'' weight vectors $\vecw$ we can find $\vecw$-\pav
winners in polynomial time. As the 
$(1,0,\dots)$-\pav rule is essentially the classic Chamberlin--Courant rule \citep{ChCo83a}
for dichotomous preferences, our next result can be seen as an extension of the results of \citep{BSU13a}
and \citep{sko-yu-fal:j:mwsc} for the Chamberlin--Courant rule and single-peaked
and single-crossing preferences: while we work on a less expressive domain
(dichotomous preferences vs. total orders), we can handle a larger class of rules
(all weight vectors with a constant number of non-zero entries rather than just $(1,0,\dots,)$).

\begin{theorem}
Consider a weight vector $\vecw$ where $w_i=0$ for $i>i_0$ for some constant $i_0$.
Then given a dichotomous profile $\calP=(v_1,\dots,v_n)$ over a candidate set
$C=\{c_1,\dots,c_m\}$ and a target committee size $k$, if $\calP$
satisfies VI, we can find a winning committee
under $\vecw$-\pav in polynomial time.
\end{theorem}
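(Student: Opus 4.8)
The plan is a dynamic program that exploits two things. First, because $\vecw$ is truncated, a voter's utility $u_\vecw(|W\cap v_i|)=\sum_{j=1}^{\min(|W\cap v_i|,i_0)}w_j$ depends only on $\min(|W\cap v_i|,i_0)$; in particular, once $|W\cap v_i|$ reaches $i_0$, further approved candidates are worthless to voter $i$. Second, I would use the VI structure to keep the bookkeeping polynomial. As usual it suffices to compute the optimal score over committees of size at most $k$ (utilities are non-decreasing, so some size-$k$ committee is optimal) and then recover an optimizer by standard backtracking.

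Fix a voter order $v_1\sqsubset\cdots\sqsubset v_n$ witnessing VI, and for each candidate $c$ let $[b_c,e_c]$ be the interval of voters approving $c$. I would process candidates in order of non-decreasing left endpoint $b_c$, one candidate at a time, each step deciding only ``select $c$'' or ``do not select $c$''. The state after a partial processing is: the number $\ell\in[0:k]$ of candidates selected so far, together with a \emph{capped coverage frontier} recording, for each voter $v_{i'}$ not yet finalized, the value $\min\bigl(i_0,\,\#\{\text{already-selected }c:\ i'\in[b_c,e_c]\}\bigr)$. Once every candidate with $b_c\le j$ has been processed, $|W\cap v_j|$ is fully determined (each candidate approved by $v_j$ has $b_c\le j\le e_c$), so at that moment I ``finalize'' $v_j$: add $u_\vecw$ of its capped coverage to the running objective and delete $v_j$ from the frontier. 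Thus the algorithm interleaves candidate-processing with finalizing $v_1,v_2,\dots,v_n$ in order.

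Two facts make this polynomial. (i) By VI, any already-selected candidate that covers $v_{i'}$ also covers every earlier not-yet-finalized voter, since its interval contains $[j,i']$ where $v_j$ is the next voter to be finalized; hence the coverage counts, and their cappings, are non-increasing along the voter order, so the frontier is a non-increasing sequence of length at most $n$ over $\{0,\dots,i_0\}$. There are only $\calO(n^{i_0})$ such sequences, each pinned down by the at most $i_0$ positions where it drops a level. (ii) Working with capped values loses no information: $u_\vecw(\min(a,i_0)+b)=u_\vecw(a+b)$ and $\min(i_0,\min(i_0,a)+b)=\min(i_0,a+b)$ for all integers $a,b\ge 0$, so the capped coverage of a finalized voter yields its true utility, and selecting a new candidate updates the frontier simply by incrementing-and-recapping a prefix. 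With $\calO(m)$ candidate steps, $\calO(k)$ values of $\ell$, $\calO(n^{i_0})$ frontiers, and an $\calO(n)$-time prefix update per transition, the total running time is polynomial in $m,n,k$ for constant $i_0$. Crucially, because candidates are handled individually, I never enumerate subsets of the (possibly many) candidates sharing a left endpoint, which is where a naive approach would blow up.

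The step I expect to require the most care is the correctness argument for the frontier being a sufficient statistic: one must verify that two partial solutions with the same $(\ell,\text{frontier})$ are genuinely interchangeable with respect to every remaining choice and every not-yet-finalized voter's utility — this is exactly where fact (ii) is used — and that finalizing voters at the prescribed moments reproduces $\sum_i u_\vecw(|W\cap v_i|)$ exactly. The remaining work is routine bookkeeping. The real content is the structural monotonicity in (i): it is what prevents the frontier from being an arbitrary, exponentially large vector, and it plays here the role that the bound on the number of voters approving each candidate played in Theorem~\ref{thm:vi-pav-d}.
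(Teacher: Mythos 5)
Your argument is correct, and it takes a genuinely different route from the paper's. The paper's proof is a direct modification of the algorithm of Theorem~\ref{thm:vi-pav-s}: it sweeps over the voters in the VI order and, instead of remembering all of $W\cap v_i$, remembers only a subset $A\subseteq W\cap v_i$ of size at most $i_0$ (since $w_j=0$ for $j>i_0$ makes additional approved committee members worthless), giving $\calO(n\cdot m^{i_0}\cdot k)$ states with the candidates' identities stored explicitly. You instead sweep over the \emph{candidates} sorted by the left endpoints of their voter-intervals and keep, for the pending voters, only capped coverage \emph{counts}; this is closer in spirit to Theorem~\ref{thm:vi-pav-d}, with your monotone-frontier observation replacing the bound $d$ on how many voters approve a candidate. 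The two facts you isolate are exactly the load-bearing ones and both check out: the frontier is non-increasing because a processed candidate covering a pending voter $v_{i'}$ must satisfy $b_c\le b_{c'}\le j$ for the unprocessed candidate $c'$ witnessing that $v_j$ is still pending, so its interval contains $[j,i']$; and the cap identities $u_\vecw(\min(a,i_0)+b)=u_\vecw(a+b)$ and $\min(i_0,\min(i_0,a)+b)=\min(i_0,a+b)$ make the capped frontier a genuine sufficient statistic (also note that an unprocessed candidate never covers a finalized voter, so the finalized contribution is indeed frozen, and the incremented prefix stays non-increasing). Your approach buys binary per-candidate decisions (no enumeration of candidate subsets) and a structural fact of independent interest; the paper's buys brevity, being a two-line change to an existing dynamic program. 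Both yield polynomial time for constant $i_0$, with incomparable but similar state counts ($m^{i_0}$ versus $n^{i_0}$). Minor points to tidy up in a full write-up: handle candidates approved by no voter (they are pure filler and can be appended at the end to reach size exactly $k$), and state the finalization invariant explicitly so the prefix-update claim is airtight.
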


\begin{proof}
Assume that $\calP$ satisfies VI with respect to the order of voters $v_1\sqsubset\dots\sqsubset v_n$. 

The following algorithm is a refinement of Theorem~\ref{thm:vi-pav-s}.
For each triple
$(i,A,\ell)$, where $i\in[1:n]$, $A\subseteq v_i$, and $\ell\in [0:k]$,
let $r(i,A,\ell)$ be the maximum utility that the first $i$ voters can obtain from a committee $W$  
such that $|W|=\ell$, and $W\subseteq v_1\cup\ldots\cup v_{i}$ and $A\subseteq W$.  

We have $r(1,A,\ell)=u(\ell)$ for every $\ell\in [0:|v_1|]$ and $A\subseteq v_1$ with $|A|= \min(i_0,\ell)$. 
In addition, we have $r(1,A,\ell)=-\infty$ for every other $A\subseteq v_1$ and $\ell\in[0:k]$.
To compute $r(i+1,A,\ell)$ for $i\in[1:n-1]$, $A\subseteq v_{i+1}$ with $|A|\leq i_0$ and $\ell\in[|A|:k]$,
we 
let $s=|v_{i+1}\setminus (v_{i}\cup A)|$, i.e., the maximal number of candidates that might have been added in the $i+1$st step to the committee but that do not show up in $A$, and
set
$$
r(i+1,A,\ell) = \max_{} r(i, D\cup(A\cap v_i), \ell-|A|-r)+u(|A|), 
$$
where the maximum is taken over all $D\subseteq v_i\setminus v_{i+1}$ with $|D|\in [0: i_0-|A\cap v_i|]$ and all $r\in[0:s]$.

This dynamic program has $n\cdot m^{i_0}\cdot (k+1)$ states, 
and the value of each state is computed using $\calO(m^{i_0}+1)$ arithmetic operations.
Assuming that basic calculations take constant
time, we obtain a total runtime of $\calO(n\cdot m^{2i_0+1}\cdot k)$, which is polynomial for constant $i_0$.
\end{proof}

\begin{theorem}
Consider a weight vector $\vecw$ where $w_i=0$ for $i>i_0$ for some constant $i_0$.
Then given a dichotomous profile $\calP=(v_1,\dots,v_n)$ over a candidate set
$C=\{c_1,\dots,c_m\}$ and a target committee size $k$, if $\calP$
satisfies CI, we can find a winning committee
under $\vecw$-\pav in polynomial time.
\end{theorem}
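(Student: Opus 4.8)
The plan is to run a left-to-right sweep over the candidates, as in the proof of Theorem~\ref{thm:ci-pav-s}, but with a state tailored to truncated weight vectors rather than to small votes. Fix an order $c_1\lhd\dots\lhd c_m$ witnessing CI, and discard empty votes (they always contribute $0$), so every vote is an interval $v_i=\{c_j:b_i\le j\le e_i\}$. Because $w_p=0$ for $p>i_0$, we have $u_\vecw(p)=u_\vecw(\min\{p,i_0\})$, so the $\vecw$-\pav score of a committee $W$ equals $\sum_{i=1}^{n}u_\vecw(\min\{|W\cap v_i|,i_0\})$: only the first $i_0$ committee members approved by each voter ever matter.

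The dynamic program, at candidate index $j$, records the number $\ell$ of chosen committee members among $c_1,\dots,c_j$ together with the positions of the last few of them. A state is a triple $(j,\ell,P)$ with $j\in[0:m]$, $\ell\in[0:k]$, $P\subseteq[1:j]$, where $|P|=\min\{\ell,i_0\}$ and $P$ is intended to list the positions of the $|P|$ rightmost committee members placed among $c_1,\dots,c_j$. Let $A(j,\ell,P)$ be the maximum, over all $W\subseteq\{c_1,\dots,c_j\}$ with $|W|=\ell$ whose rightmost committee-member positions are exactly $P$, of $\sum_{i:\,e_i\le j}u_\vecw(\min\{|W\cap v_i|,i_0\})$. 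The crux is that $(j,\ell,P)$ is a sufficient statistic: if $W,W'\subseteq\{c_1,\dots,c_j\}$ realize the same state, then any voter $v_i$ with $e_i>j$ gets the same utility from $W$ and from $W'$ under every way of extending them with committee members from $\{c_{j+1},\dots,c_m\}$. Indeed, for such a voter the number of its already-placed approved committee members, $|W\cap\{c_{b_i},\dots,c_j\}|$, is $\ge i_0$ exactly when $|P|=i_0$ and $b_i\le\min P$, and equals $|\{p\in P:p\ge b_i\}|$ otherwise; both alternatives depend on $P$ alone, and the later choices update $P$ identically for $W$ and $W'$.

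Given this, set $A(0,0,\emptyset)=0$ and all other entries to $-\infty$, and sweep $j$ from $0$ to $m-1$: from a state $(j,\ell,P)$ one either skips $c_{j+1}$, updating $A(j+1,\ell,P)$ with $A(j,\ell,P)+\sum_{i:\,e_i=j+1}u_\vecw(n_i(P))$, or (if $\ell<k$) takes $c_{j+1}$, forming $P'$ from $P$ by adjoining $j+1$ and deleting $\min P$ when $|P|=i_0$, and updating $A(j+1,\ell+1,P')$ with $A(j,\ell,P)+\sum_{i:\,e_i=j+1}u_\vecw(n_i(P'))$; here $n_i(Q):=i_0$ if $|Q|=i_0$ and $b_i\le\min Q$, and $n_i(Q):=|\{p\in Q:p\ge b_i\}|$ otherwise, so that in all cases $u_\vecw(n_i(Q))$ is exactly the utility that voter $i$, who has $e_i=j+1$, derives from the partial committee recorded by the target state. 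Correctness follows by induction from the sufficient-statistic claim; the optimal $\vecw$-\pav score is $\max_{P}A(m,k,P)$, and an optimal committee is recovered by standard back-pointers.

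For the running time, $P$ ranges over subsets of $[1:m]$ of size at most $i_0$, giving $\calO(m^{i_0})$ choices and $\calO(k\,m^{i_0+1})$ states, each with $\calO(1)$ outgoing transitions; evaluating $\sum_{i:\,e_i=j+1}u_\vecw(n_i(\cdot))$ costs $\calO(i_0\,|\{i:e_i=j+1\}|)$ per transition, so the total over the whole sweep is $\calO(i_0\,n\,k\,m^{i_0})$, which is polynomial for constant $i_0$. The part I expect to require the most care is the sufficient-statistic claim together with the bookkeeping on the transition edges --- identifying which voters become ``finished'' at column $j+1$ and tracking how $P$ changes when the oldest recorded committee member drops out of the window --- while the remainder is routine.
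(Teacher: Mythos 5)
Your proof is correct and takes essentially the same approach as the paper's: a left-to-right dynamic program over the CI candidate order whose state records the committee size together with the set of the $\min\{\ell,i_0\}$ rightmost selected candidates, which is a sufficient statistic because the truncated weights cap each voter's relevant intersection at $i_0$ and each vote is an interval. The only differences are bookkeeping ones --- you credit each voter's utility when its interval closes rather than via marginal contributions at selection time, and your forward formulation makes the dropped candidate deterministic --- and both yield a polynomial bound of order $m^{i_0}$.
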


\begin{proof}
Assume that $\calP$ satisfies CI with respect to the order of candidates $c_1\lhd\dots\lhd c_m$.
The following algorithm is a refinement of Theorem~\ref{thm:ci-pav-s}.
For two sets $C_1,C_2\subseteq C$ we write $C_1\lhd C_2$ to denote that for all $c\in C_1$ and $d\in C_2$ it holds that $c\lhd d$.
For each triple $(j,A,\ell)$, where $j\in[1:m]$, $A\subseteq \{c_1,\dots,c_j\}$, $|A|\leq i_0$ and $\ell\in [0:k]$,
let $r(j,A,\ell)$ be the maximum utility that voters can obtain from a committee $W$
such that $A\subseteq W$, $|W|=\ell$ and $W\setminus A\lhd  A$.
Also, for each $j\in[1:m-s+1]$ and each $A\subseteq \{c_j,\dots, c_{j+s-1}\}$ 
let $t(A,c)=\sum_{v\in\calP: c\in v} u(|A\cap v|)$.
It is essential that, given a committee $W$ satisfying the conditions above, $t(W,c)=t(A,c)$ assuming CI preferences and $c'\lhd c$ for all $c'\in A\setminus\{c\}$.
Furthermore, note that all the quantities $t(.,.)$ can be computed in time $\calO(n\cdot m^{i_0+1})$ since we assume that $u(.)$ can be computed in constant time.

We have $r(1,\emptyset,0)=0$, $r(1,\{c_1\},1)=|\{v_i: c_1\in v_i\}|$,
and $r(1,A,\ell)=-\infty$ if $(A,\ell)\neq(\emptyset,0),(\{c_1\},1)$.
The quantities $r(j+1,A,\ell)$ for $j\in[1:m-1]$ and $A\subseteq \{c_1,\dots,c_{j+1}\}$ with $|A|\leq i_0$ can now be computed as follows.
If $c_{j+1}\not\in A$, we set 
$$
r(j+1,A,\ell) = r(j,A,\ell).
$$
Now, suppose that $c_{j+1}\in A$. Let $A'=A\setminus\{c_{j+1}\}$. 
Then $r(j+1,A,\ell)=\max\{r_1,r_2\}-t(A',c_{j+1})+t(A,c_{j+1})$ where 
\begin{align*}
r_1 &=\max_{c \in C\text{ with }\{c\}\lhd A} r(j,A'\cup \{c\},\ell-1),\\
r_2 &=r(j,A',\ell-1).
\end{align*}
We output $\max_{A\subseteq C\text{ with }|A|\leq i_0}r(m,A,k)$. 
Our dynamic program has at most $m^{i_0+1}\cdot (k+1)$ states, and the utility of each state
can be computed in time $\calO(m)$. Combining this with the time used to compute
$t(.,.)$, we obtain a total runtime of $\calO(n\cdot m^{i_0+1}\cdot k)$, which is polynomial for fixed $i_0$.
\end{proof}

Moreover, for the more restricted domains, such as VEI, CEI, WSC and PART   
we can design polynomial-time algorithms for both \mav and \pav,
under no additional constraints on preferences (again, 
our results extend to $\vecw$-\pav).

\begin{theorem}\label{thm:pav-mav-vei}
Given a dichotomous profile $\calP=(v_1,\dots,v_n)$ over a candidate set
$C=\{c_1,\dots,c_m\}$ and a target committee size $k$, if $\calP$
satisfies VEI, we can find a winning committee 
under \mav and \pav in polynomial time.
\end{theorem}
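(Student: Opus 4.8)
The plan is to reduce the search for an optimal committee to a one‑parameter family of ``canonical'' committees. First I would compute, using the VEI‑detection algorithm described above, a voter order $v_1\sqsubset\dots\sqsubset v_n$ witnessing VEI. For every candidate $c$, the set of voters approving $c$ is then either a prefix or a suffix of this order (a candidate approved by everybody, or by nobody, may be treated as either). Call $c$ a \emph{left candidate} if its approval set is a nonempty prefix $\{v_1,\dots,v_{\ell(c)}\}$, and a \emph{right candidate} otherwise, writing $\{v_{r(c)},\dots,v_n\}$ for its approval set when this set is a nonempty suffix and setting $r(c)=+\infty$ for a candidate approved by nobody; a candidate approved by all voters is placed on the left with $\ell(c)=n$. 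Write $L$ and $R$ for the resulting sets of left and right candidates, so $|L|+|R|=m$. The key feature of VEI is \emph{nesting}: among left candidates, $\ell(c)\ge\ell(c')$ implies that every voter approving $c'$ also approves $c$, and symmetrically among right candidates, $r(c)\le r(c')$ implies the analogous inclusion.

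Next I would establish an exchange lemma. Fix $a\in[0:k]$ with $a\le|L|$ and $k-a\le|R|$, and let $W_a$ be the \emph{canonical committee for $a$}: the $a$ left candidates with the largest $\ell$‑values together with the $k-a$ right candidates with the smallest $r$‑values (ties broken arbitrarily). The claim is that among all size‑$k$ committees containing exactly $a$ left candidates, $W_a$ simultaneously maximises the \pav‑score and minimises the \mav‑score. To see this, take any such committee $W$ that is not canonical; then (by a standard exchange argument, using that a non‑canonical left part must omit a left candidate whose $\ell$‑value is at least that of some left candidate it contains, or symmetrically on the right) there is a candidate $c$ outside $W$ dominating, in the nesting sense, a candidate $c'$ of the same side inside $W$. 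Replacing $c'$ by $c$ preserves $|W|$ and the number of left candidates and, by nesting, weakly increases $|W\cap v_i|$ for \emph{every} voter $i$. Since $u_\vecw$ is non‑decreasing, this weakly increases each voter's \pav‑utility and hence the total \pav‑score; and since a voter's \mav‑dissatisfaction $k+|v_i|-2|W\cap v_i|$ is non‑increasing in $|W\cap v_i|$, it weakly decreases every voter's dissatisfaction and hence the \mav‑score. Iterating turns $W$ into $W_a$ without harming either objective.

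The algorithm then enumerates $a=0,1,\dots,k$, skipping the infeasible values (at least one feasible $a$ exists because $k\le m=|L|+|R|$), builds $W_a$, and evaluates both its \pav‑score and its \mav‑score directly, returning the best committee found for each rule. Correctness is immediate from the exchange lemma: an optimal committee contains some number $a^\ast$ of left candidates, and $W_{a^\ast}$ is at least as good. For the running time: computing the VEI order and classifying candidates takes $\calO(mn)$; sorting $L$ by $\ell$ and $R$ by $r$ takes $\calO(m\log m)$; and for each of the $k+1$ values of $a$ the set $W_a$ is a prefix of these two sorted lists, so all of the quantities $|W_a\cap v_i|$ — and thus both scores — can be computed in polynomial time (e.g.\ $\calO(nk)$ per value of $a$). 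The overall bound is polynomial in $m$, $n$, $k$. The identical argument handles $\vecw$-\pav, the only property of $\vecw$ used being that $u_\vecw$ is non‑decreasing.

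The step I expect to be most delicate is the bookkeeping around trivial candidates — those approved by every voter or by no voter — and the verification that fixing the number of left candidates to a constant $a$ is genuinely without loss of generality; the exchange lemma itself, which has to treat the sum objective of \pav and the $\max$ objective of \mav through the single observation that increasing every $|W\cap v_i|$ is good for both, is the conceptual core and should go through cleanly once the left/right classification is set up correctly.
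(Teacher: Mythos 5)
Your proof is correct and follows essentially the same route as the paper's: exploit the fact that VEI makes the prefix-approved and suffix-approved candidates each totally ordered by containment of their approval sets, argue by exchange that some optimal committee (for both \pav and \mav) takes top elements of each chain, and enumerate the split $a$ in polynomial time. The only difference is cosmetic — the paper treats universally approved and never-approved candidates as separate cases, while you fold them into the two chains via $\ell(c)=n$ and $r(c)=+\infty$ — and your explicit exchange lemma just spells out what the paper leaves as ``easy to see.''
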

\begin{proof}
Assume without loss of generality that $\calP$ satisfies
VEI for voter order $v_1\sqsubset\dots\sqsubset v_n$.
Each candidate in $C$ belongs to one 
of the following four groups:
$C_1=v_1\cap v_n$, $C_2=v_1\setminus v_n$, $C_3=v_n\setminus v_1$, 
and $C_4 = \overline{v_1}\cap \overline{v_n}$; candidates in $C_1$
are approved by all voters and candidates in $C_4$ are not approved
by any of the voters. 

Suppose first that $|C_1\cup C_2\cup C_3|< k$.
Then there exists an optimal committee for both \pav and \mav
that contains all candidates in $C_1\cup C_2\cup C_3$
and exactly $k-|C_1\cup C_2\cup C_3|$ candidates from $C_4$.
Hence, we can now assume that this is not the case. 
Then there exist an optimal committee
that contains no candidates from $C_4$.

Now, if $|C_1|\ge k$, an optimal committee for both \pav and \mav consists
of $k$ candidates from $C_1$, and if $|C_1|< k$, there exists
an optimal committee that contains all candidates in $C_1$.
It remains to decide how to allocate the remaining places 
among candidates in $C_2$ and $C_3$. To do so, we observe that
there is a natural ordering over each of these sets: given a pair of candidates
$(c, c')$ in $C_2\times C_2$ or $C_3\times C_3$, 
we write $c\le c'$ if $\{i: c\in v_i\}\subseteq \{i: c'\in v_i\}$.
Note that every two candidates in $C_2$ are comparable with respect to 
$\le$, and so are every two candidates in $C_3$.
It is now easy to see that there exists an optimal committee (for \pav or \mav)
that consists of candidates in $C_1$,
top $p$ candidates in $C_2$ with respect to $\le$ 
and top $r$ candidates in $C_3$ with respect to $\le$
for some non-negative values of $p,r$ with $p+r+|C_1|=k$.
Thus, by considering at most $k^2$ possibilities for $p$ and $r$,
we can find an optimal committee.
\end{proof}

For CEI, we employ a dynamic programing algorithm, somewhat similar to the one used in Theorem~\ref{thm:ci-pav-s}.
Since we consider a more constrained preferences (CEI instead of CI), we do not require to maintain an exponential number of states.

\begin{theorem}\label{thm:pav-mav-cei}
Given a dichotomous profile $\calP=(v_1,\dots,v_n)$ over a candidate set
$C=\{c_1,\dots,c_m\}$ and a target committee size $k$, if $\calP$
satisfies CEI, we can find a winning committee 
under \mav and \pav in polynomial time.
\end{theorem}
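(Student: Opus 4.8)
The plan is to fix a candidate order $c_1\lhd\dots\lhd c_m$ witnessing CEI (it can be found within the time bound of the detection algorithm above), and to exploit the fact that, with respect to $\lhd$, every vote $v_i$ is either a prefix $\{c_1,\dots,c_{p_i}\}$ or a suffix $\{c_{q_i},\dots,c_m\}$. For a size-$k$ committee $W$ I would track the nondecreasing ``count function'' $f(j)=|W\cap\{c_1,\dots,c_j\}|$ for $j\in[0:m]$; the map $W\mapsto f$ is a bijection onto the nondecreasing functions $f\colon[0:m]\to[0:k]$ with $f(0)=0$, $f(m)=k$ and $f(j)-f(j-1)\in\{0,1\}$. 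The point is that each voter's contribution to the \pav- or \mav-score depends only on $f$ at a single \emph{decision point}: a prefix vote has $|W\cap v_i|=f(p_i)$, and a suffix vote has $|W\cap v_i|=k-f(q_i-1)$. So I would optimize over $f$ by a left-to-right dynamic program whose state is the pair $(j,\ell)$ with $\ell=f(j)\in[0:\min\{j,k\}]$, the two transitions into $(j,\ell)$ (from $(j-1,\ell)$ and $(j-1,\ell-1)$) corresponding to leaving $c_j$ out of, or putting it into, the committee.

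For \pav I would first compute, in $\calO(n+m)$ time, the numbers $a_j,b_j$ of prefix- resp.\ suffix-votes whose decision point is $j$, so that the utility contributed at step $j$ when $f(j)=\ell$ is $a_j\,u(\ell)+b_j\,u(k-\ell)$, evaluable in $\calO(1)$. Taking $g(j,\ell)$ to be the maximum total utility of the voters decided by step $j$, over admissible partial $f$ with $f(j)=\ell$, one has $g(j,\ell)=\bigl(a_j u(\ell)+b_j u(k-\ell)\bigr)+\max\{g(j-1,\ell),g(j-1,\ell-1)\}$, with $-\infty$ for out-of-range states; the optimal score is $g(m,k)$, and a winning committee is recovered by back-tracking. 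This is $\calO(mk)$ states with $\calO(1)$ work each, hence $\calO(mk+n)$ overall, and it goes through verbatim for $\vecw$-\pav with $u$ replaced by $u_\vecw$. For \mav I would use the same state space but track a min-of-max: the dissatisfaction of voter $i$ equals $k+|v_i|-2|W\cap v_i|$, which is again a fixed affine function of $f$ at the decision point, so with $D(j,\ell)$ the largest dissatisfaction among voters decided at step $j$ when $f(j)=\ell$ (computable in $\calO(1)$ after linear preprocessing), I set $h(j,\ell)=\max\{D(j,\ell),\min\{h(j-1,\ell),h(j-1,\ell-1)\}\}$ and read off $h(m,k)$; alternatively one can binary-search the target \mav-score $T\in[0:m]$ and run a plain reachability DP subject to $D(j',\ell)\le T$.

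I expect the only real subtlety to be the observation that a single integer $\ell=f(j)\in[0:k]$ suffices as the DP state --- this is exactly where CEI buys something over CI (Theorem~\ref{thm:ci-pav-s}): under CI a vote that is an interval straddling position $j$ forces one to remember which of the last few candidates are in the committee, giving the exponential factor, whereas under CEI every vote is ``resolved'' at one point and only its count matters. The rest is routine bookkeeping: checking that $|W\cap v_i|$ is indeed the stated function of $f$ and that $W\leftrightarrow f$ is a bijection; fixing one representation (prefix or suffix) for votes admitting both, in particular the trivial votes $v_i=\emptyset$ and $v_i=C$; and confirming that the per-step contributions to $g$ and to $D$ can be precomputed in linear time.
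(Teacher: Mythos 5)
Your approach is essentially the paper's: a left-to-right dynamic program over states $(j,\ell)$ with $\ell=|W\cap\{c_1,\dots,c_j\}|$, exploiting the fact that under CEI every vote is a prefix or a suffix of the candidate order and hence contributes to the \pav- or \mav-objective via a function of $\ell$ alone at a single decision point; the paper's proof uses exactly this state space and the same two-way transition.

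One bookkeeping slip to fix: for a suffix vote $v=\{c_{q},\dots,c_m\}$ you correctly note $|W\cap v|=k-f(q-1)$, but your recurrence then charges $b_j\,u(k-\ell)$ with $\ell=f(j)$ \emph{outside} the $\max\{g(j-1,\ell),g(j-1,\ell-1)\}$, which is off by one on the branch where $c_j\in W$ (there $f(j-1)=\ell-1$, so the suffix voter should receive $u(k-\ell+1)$, not $u(k-\ell)$). The paper's recurrence avoids this by splitting the suffix contribution across the two branches ($u(k-\ell)$ when $c_{j+1}$ is excluded versus $u(k-\ell+1)$ when it is included); equivalently you could assign each suffix vote $\{c_q,\dots,c_m\}$ the decision point $q-1$. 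The same correction is needed for your $D(j,\ell)$ in the \mav recurrence. With that adjustment everything goes through, including the min-of-max argument for \mav, which is sound because the future maximum depends only on the state $(j,\ell)$.
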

\begin{proof}
Assume that $\calP$ satisfies CEI with respect to the order of candidates $c_1\lhd\dots\lhd c_m$.
Let us consider PAV.
For $j\in[1:m]$, let $V_j$ denote all votes $v\in \calP$ such that $c_j$ is the rightmost approved candidate of $v$ if $c_1\in v$ and $c_j$ is the leftmost approved candidate of $v$ if $c_1\notin v$.
States are identified by a pair $(j,\ell)$, where $j\in[1:m]$ and $\ell\in[0:k]$.
Let $r(j,\ell)$ be the maximum utility that the voters $V_1\cup\cdots\cup V_j$ can obtain from a committee $W$ with $|W|=k$ such that $|W\cap \{c_1,\ldots,c_j\}|=\ell$, i.e., $W$ contains $\ell$ candidates to the left of $c_j$ (including $c_j$) and $k-\ell$ candidates to the strictly to the right of $c_j$.

We have $r(1,0)=0$ and $r(1,1)=|V_1|$, since $V_1$ contains exactly those votes that approve only $c_1$.
For $j\in[1:m]$ and $\ell\in[0:k]$ we have $r(j,\ell)=-\infty$ if $\ell>j$ or if $k-\ell>m-j$.
The remaining quantities $r(j+1,\ell)$ for $j\in[1:m-1]$ can be computed as follows:
Let $V_{j+1}^-=\{v\in V_{j+1}: c_{j}\in v\}$ and $V_{j+1}^+=\{v\in V_{j+1}: c_{j}\notin v\}$.
The quantity $r(j+1,\ell)=max(r_1,r_2)$, where
\begin{align*}
r_1 &= r(j,\ell)+\sum_{v\in V_{j+1}^-} u(\ell)+\sum_{v\in V_{j+1}^+} u(k-\ell)\quad\text{ and} \\
r_2 &= r(j,\ell-1)+\sum_{v\in V_{j+1}^-} u(\ell)+\sum_{v\in V_{j+1}^+} u(k-\ell+1).
\end{align*}
Here, $r_1$ corresponds to committees that do not contain $c_{j+1}$ and $r_2$ to committees that contain $c_{j+1}$.
We output $r(m,k)$.
These quantities can be computed in polynomial time. 

For MAV we use a similar approach.
For $j\in[1:m]$ and $\ell\in[0:k]$, let $g(j,\ell)$ be the minimum MAV-score obtainable by the voters in $V_1\cup\cdots\cup V_j$ from a committee $W$ with $|W|=k$ such that $|W\cap \{c_1,\ldots,c_j\}|=\ell$.
Recall that $V_1$ contains only votes that approve $c_1$; hence we have $g(1,0)=0$, $g(1,1)=1$ if $|V_1|\geq1$ and $g(1,0)=1$ and $g(1,1)=0$ if $V_1=\emptyset$.
For $j\in[1:m]$ and $\ell\in[0:k]$ we have $g(j,\ell)=\infty$ if $\ell>j$ or if $k-\ell>m-j$.
The remaining quantities $g(j+1,\ell)$ for $j\in[1:m-1]$ can be computed as follows:
Observe that for $v\in V_{j+1}^-$ and a committee $W$ with $|W|=k$ such that $|W\cap \{c_1,\ldots,c_j\}|=\ell$, $|W\setminus v|+|v\setminus W|=(j+1-\ell)+(k-\ell)=j+1+k-2\ell$.
For $v\in V_{j+1}^+$, it holds that $|W\setminus v|+|v\setminus W|=(m-j-1-k+\ell)+(\ell)=m-(j+1+k-2\ell)$ if $c_{j+1}\notin v$ and $|W\setminus v|+|v\setminus W|=(m-j-1-k+\ell-1)+(\ell-1)=m-(j+3+k-2\ell)$ if $c_{j+1}\in v$.
The quantity $g(j+1,\ell)=\min\{g_1,g_2\}$, where
{\setlength{\medmuskip}{2mu}
\begin{align*}
g_1 &= \max\{r(j,\ell),j+1+k-2\ell,m-(j+1+k-2\ell)\}\text{ and}\\
g_2 &= \max\{r(j,\ell-1),j+1+k-2\ell,m-(j+3+k-2\ell)\}.
\end{align*}
}
As before, $g_1$ corresponds to committees that do not contain $c_{j+1}$ and $g_2$ to committees that contain $c_{j+1}$.
We output $g(m,k)$.
\end{proof}

\begin{proposition}\label{prop:pav-mav-wsc-part}
Given a dichotomous profile $\calP=(v_1,\dots,v_n)$ over a candidate set
$C=\{c_1,\dots,c_m\}$ and a target committee size $k$, if $\calP$
satisfies WSC or PART, we can find a winning committee 
under \mav and \pav in polynomial time.
\end{proposition}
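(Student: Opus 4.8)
The plan is to handle the two domains separately, exploiting the very rigid structure that WSC and PART impose on a profile. For WSC, the key tool is Lemma~\ref{lem:WSC-char}: after discarding trivial votes (which contribute a constant to every committee's \pav-score and a fixed value to every voter's dissatisfaction under \mav, independent of $W$), the profile contains at most three distinct dichotomous weak orders $\succ_u,\succ_v,\succ_w$ with $v\in\{u\cap w, u\cup w\}$. This partitions the candidate set into the four blocks $C_1=u\cap w$, $C_2=u\setminus w$, $C_3=w\setminus u$, $C_4=\overline{u}\cap\overline{w}$, and every voter's approval set is a union of some of these blocks. Since candidates within a block are interchangeable from the point of view of every voter, a committee $W$ is fully described (for the purpose of computing both \pav and \mav scores) by the quadruple $(|W\cap C_1|,|W\cap C_2|,|W\cap C_3|,|W\cap C_4|)$, of which there are at most $(k+1)^4$ choices subject to the sizes not exceeding $|C_\ell|$ and summing to $k$. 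First I would enumerate all such quadruples, for each one compute the \pav-score as a sum over the (at most three) vote types of $u_{\vecw}$ applied to the relevant partial sums, and the \mav-score as the maximum over vote types of $|W\setminus v|+|v\setminus W|$, and return the best committee; this is clearly polynomial.

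For PART, the structure is even cleaner: by definition $C$ is partitioned into blocks $C_1,\dots,C_\ell$ and every vote equals one of these blocks exactly. Let $n_t$ be the number of voters approving block $C_t$. Again candidates within a block are indistinguishable to every voter, and moreover a voter approving $C_t$ derives utility only from committee members inside $C_t$ (under \pav) and dissatisfaction $|C_t\setminus W| + |W\setminus C_t|$ which, since the blocks are disjoint, equals $(|C_t|-|W\cap C_t|) + \sum_{s\neq t}|W\cap C_s| = |C_t| + |W| - 2|W\cap C_t| = |C_t|+k-2x_t$ where $x_t=|W\cap C_t|$. So the whole problem reduces to choosing nonnegative integers $x_1,\dots,x_\ell$ with $x_t\le |C_t|$ and $\sum_t x_t = k$. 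For \pav we maximize $\sum_t n_t\, u_{\vecw}(x_t)$, a separable concave-type objective over a bounded knapsack-style constraint; I would solve this by a straightforward dynamic program over blocks (state: index of block processed, number of committee seats used so far), or alternatively by a greedy argument using the fact that $u_{\vecw}$ has non-increasing marginal gains $w_j$. For \mav we minimize $\max_t (|C_t|+k-2x_t)$ over the same feasible set; I would binary-search on the target value $T$, which forces $x_t \ge \lceil (|C_t|+k-T)/2\rceil$ for each nonempty block, check feasibility of the resulting lower bounds against $\sum x_t = k$ and $x_t\le |C_t|$, and take the smallest feasible $T$.

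I expect the main obstacle to be purely bookkeeping rather than conceptual: one must be careful that trivial votes (and, in the WSC case, candidates in $C_1$ approved by everyone or in $C_4$ approved by no one) are handled correctly, since they shift scores by constants but those constants still matter for reporting the \mav-score and for tie-breaking; and one must double-check that the claimed interchangeability of candidates within a block is exact for \mav as well as \pav (it is, because both scoring functions depend on $W$ only through the quantities $|W\cap v_i|$, equivalently through the block-intersection sizes). A secondary point worth stating explicitly is that the WSC case genuinely needs Lemma~\ref{lem:WSC-char} to bound the number of distinct votes by three; without it one could not bound the number of candidate blocks. Once these are dispatched, both algorithms run in time polynomial in $m$, $n$ and $k$ (indeed in $(k+1)^{O(1)}\cdot\mathrm{poly}(m,n)$ for WSC and $\mathrm{poly}(m,n,k)$ for PART), which establishes the proposition.
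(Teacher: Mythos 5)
Your proposal is correct and follows essentially the same route as the paper: the WSC case is reduced via Lemma~\ref{lem:WSC-char} to choosing how many candidates to take from each of the constantly many candidate blocks, and the PART case is handled by a greedy/separable-concave argument for \pav and a threshold-feasibility check (select at least $(|C_t|+k-T)/2$ candidates from each block) for \mav. Your additional bookkeeping about trivial votes and the $x_t\le|C_t|$ feasibility conditions is a welcome tightening of the paper's proof sketch but not a different approach.
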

\begin{proof}[Proof sketch]
For WSC, we can use the characterization in Lemma~\ref{lem:WSC-char};
the problem then boils down to deciding how many candidates 
to select from each of the sets $u\setminus w$, $u\cap w$ and $w\setminus u$. 
For PART and \pav,
we can show that an optimal committee can be found by a  
natural greedy algorithm that at each point selects
the candidate with the largest ``marginal contribution'' to the total utility.
For PART and \mav, we check, for each $t=0,\dots,n$,
whether there exists a committee whose \mav-score is at most $t$.
This is the case if for each voter $v\in\calP$ we can select at least 
$(|v|+k-t)/2$ candidates from $v$. Thus, if $v_1,\dots,v_\ell$ are the distinct
votes in $\calP$, we need to check that $\sum_{i=1}^\ell |v_i|\le \ell t-(\ell-2)k$. 
\end{proof}


\section{Conclusions and Open Problems}
We have initiated research on analogues of 
the notions of single-peakedness and single-crossingness
for dichotomous preference domains. We have proposed many constraints
that capture some aspects of what it means for dichotomous preferences to be single-dimensional,
explored the relationship among them, and showed that these constraints 
can be useful for identifying efficiently solvable special cases
of hard voting problems on dichotomous domains. The algorithmic 
results in Section~\ref{sec:algorithms} can be seen as a proof that our 
approach has merit; however, there is certainly room for improvement there,
both in terms of removing restrictions on the sizes of approval sets 
and number of voters that approve each candidate (for \pav) 
and in terms of considering larger domains, such as PSC for \pav
and CI/VI for \mav.

For many of our constraints, we have provided efficient algorithms for checking whether
a given dichotomous profile satisfies that constraint; only checking of PSC/SSC remains as an open case.
We can also ask if it is possible
to detect if a given dichotomous profile is close to satisfying a structural
constraint, and whether such ``almost-structured'' profiles have useful algorithmic 
properties; similar issues for profiles of total orders 
have recently received a lot of attention in the literature
\citep{cor-gal-spa:c:sp-width,cor-gal-spa:c:spsc-width,bre-che-woe:c:distance,erd-lac-pfa:c:nearly-sp,elk-lac:c:nearly,FHH14}.

\section*{Acknowledgments}

This work was supported by the European Research Council (ERC) under grant number 639945
(ACCORD).
The second author was additionally supported by the Austrian Science Foundation FWF, grant P25518-N23 and Y698.

\bibliographystyle{plainnat}
\bibliography{approval}

\end{document}